\newcommand{\jk}[1]{  \ifthenelse{\boolean{showcomments}}
{ \textcolor{red}{(JK says:  #1)}} {}  }
\newcommand{\ignore}[1]{}
\definecolor{bblue}{HTML}{4F81BD}
\definecolor{rred}{HTML}{C0504D}
\definecolor{ggreen}{HTML}{9BBB59}
\definecolor{ppurple}{HTML}{9F4C7C}
\newtheorem{lemma}{Lemma}
\newtheorem{theorem}{Theorem}
\newtheorem{remark}[theorem]{Remark}
\def\bb0{{\mathbb{0}}}
\def\bb{{\mathbf{b}}}
\def\b0{{\mathbf{0}}}
\def\opt{\mathsf{OPT}}
\def\b1{{\mathbf{1}}}
\def\cA{\mathcal{A}}
\def\cJ{\mathcal{J}}
\def\sf0{{\mathsf{0}}}
\def\nn{\nonumber}
\newlength{\figurewidth}\setlength{\figurewidth}{0.6\columnwidth}
\newcounter{one}
\newcounter{two}
\author{\parbox{5 in}{ \centering Rahul Vaze\\
        School of Technology and Computer Science \\
        Tata Institute of Fundamental Research\\
      	Mumbai, India\\
	rahul.vaze@gmail.com}}
\begin{document}
\title{ Scheduling for Multi-Phase Parallelizable Jobs}

%
%

\maketitle

\begin{abstract}
With multiple identical unit speed servers, the online problem of scheduling jobs that migrate between two phases, limitedly parallelizable or completely sequential, and choosing their respective speeds to minimize the total flow time  is considered. 
In the limited parallelizable regime, allocating $k$ servers to a job, the speed extracted is $k^{1/\alpha}, \alpha>1$, a sub-linear, concave speedup function, while in the sequential phase, a job can be 
processed by at most one server with a maximum speed of unity. 
A LCFS based algorithm is proposed for scheduling jobs which always assigns equal speed to the jobs that are in the same phase (limitedly parallelizable/sequential), and is shown to have a constant (dependent only on $\alpha > 1$) competitive ratio. For the special case when all 
jobs are available beforehand, improved competitive ratio is obtained.
\end{abstract}

\section{Introduction}
In the presence of multiple servers, how to schedule {\it parallelizable} jobs to minimize the sum of their response times (called the flow time) is an incredibly important 
and analytically challenging problem, e.g. in large data centers. With multiple servers, the parallelizability of job is captured by the total speed assigned to it when processed by multiple servers  simultaneously. 
Let the total number of servers be $N$, where each server can operate at the maximum speed of unity. Then, typically \cite{berg2017towards, berg2019hesrpt, im2016competitively, edmonds2000scheduling, edmonds2009scalably, agrawal2016scheduling,  verma2015large, lin2018model, vazeperf2021}, if $k\le N$ is the number of servers assigned to a job, the resulting speed obtained is $s(k) = k^{1/\alpha}$. Depending on $\alpha$ (called the speed-up exponent), i) if $\alpha=1$, the job
is called fully parallelizable, otherwise if $\alpha >1$, its called limitedly parallelizable, while if $\alpha=\infty$ for $\forall \ k>1$ and $\alpha=1$ for $k\le 1$, it is called sequential. 

In most practical settings \cite{d1,o2009star,tallent2009effective,nguyen1996using,shvachko2010hadoop} each job does not necessarily have a  single phase of parallelizability, but migrates between different phases at different times during its execution. For example in a MapReduce framework \cite{shvachko2010hadoop}, initially, jobs have full/limited parallelizability, while in the concluding stages they become sequential. 
Given practical considerations as described in detail in \cite{berg2021case}, it is reasonable to consider the case of jobs having either limited parallelizability \cite{berg2019hesrpt} (called elastic phase), or are sequential (called in-elastic phase), where $2\le \alpha\le 3$ is the most relevant regime for limited parallelizability. 

Thus, in this paper, we consider the online problem of scheduling jobs and how many servers to allocate to each job being processed to minimize the flow time, where each job has two possible types of phases of parallelizability, either elastic or in-elastic, and where jobs arrive at arbitrary times, have arbitrary number of elastic and in-elastic phases, 
and have arbitrary job sizes for each phase. 
 To quantify the performance of an online algorithm, we consider the metric of  competitive ratio, that is defined as the ratio of the flow time of the online algorithm and the optimal offline algorithm $\opt$ (that knows the entire input sequence in advance) maximized over all possible inputs (worst case).

\subsection{Prior Work}
\subsubsection{Single Phase}

With limited parallelizability, the single phase scheduling problem of finding how many servers to allocate to each job that minimizes the flow time is challenging, and has been an object of immense interest \cite{berg2017towards, berg2019hesrpt, im2016competitively, edmonds2000scheduling, edmonds2009scalably, agrawal2016scheduling, vazeperf2021}. 
With limited parallelizability, the single phase scheduling problem has been considered for two models i) the combinatorial discrete allocation model \cite{im2016competitively}, where an integer number of servers are assigned to any job, and ii) the continuous allocation model \cite{edmonds2000scheduling, edmonds2009scalably, agrawal2016scheduling, berg2017towards, berg2019hesrpt, vazeperf2021}, that treats the $N$ servers as a single resource block which can be partitioned into any size and assigned to any job. 
In the continuous allocation model, for the online case where jobs arrive over time, \cite{vazeperf2021} proposed a constant competitive algorithm that only depends on the exponent $\alpha$, while an optimal algorithm has been derived in \cite{berg2019hesrpt} when all jobs are available at time $0$. In practice, some of the methods for server allocation include packing based \cite{verma2015large}, and resource reservation algorithms \cite{ren2016clairvoyant}. Heuristic policies with only numerical performance analysis  can be found in \cite{lin2018model}.

\subsubsection{Multiple Phases}
The multiple phase scheduling problem has primarily been considered in the continuous allocation model  \cite{edmonds2000scheduling, edmonds2009scalably, agrawal2016scheduling}, where there are arbitrary number of phases with arbitrary speed-up exponents $\alpha$ for each phase.
 In this line of work,
 mostly the non-clairvoyant setting (the algorithm is not  aware of the remaining size of the jobs or the exponent $\alpha$ of the current/future phases), with few exceptions where clairvoyant setting has been studied \cite{turek1994scheduling1,turek1994scheduling2}. The competitive ratio of any non-clairvoyant online algorithm (both deterministic and randomized) is known to be at least $\sqrt{n}$ ($n$ is the total number of jobs) \cite{edmonds2000scheduling},  when there are arbitrary number of phases with different exponents $\alpha$.
 
In light of the lower bound, resource augmentation is considered, where an algorithm is allowed more resources than the optimal offline algorithm. Algorithms with constant competitive ratios have been derived as a function of the resource augmentation factor \cite{edmonds2000scheduling, edmonds2009scalably}.  In particular, algorithm \textsf{EQUI} that assigns equal speed to all jobs (without knowing even the current phase index for each job) has a constant competitive ratio when given double the number of servers compared to the $\opt$ \cite{edmonds2000scheduling}. A more refined competitive ratio result with resource augmentation was derived in \cite{edmonds2009scalably}.
 Surprisingly, for the special case, where all phases are strictly $\rho$ sub-linear for any $\rho>0$, where the speed function $s(k)$ (speed 
assigned to job when allocated $k$ servers) satisfies the relation $\frac{s(k_2)}{s(k_1)} \le \left(\frac{k_2}{k_1}\right)^{1-\rho}$ whenever $k_1\le k_2$, \textsf{EQUI} has a competitive ratio  of 
$2^{1/\rho}$ against a clairvoyant optimal offline algorithm without any resource augmentation \cite{edmonds2000scheduling}. Notably, the in-elastic phase considered in this paper is not strictly $\rho$ sub-linear.


From a practical point of view, the two phase problem is more relevant, and for which 
heuristic policies, e.g., the phase-aware FCFS \cite{phaseawareFCFS} that schedules jobs in their arrival order, while assigning at most speed $1$ to a job that is in its in-elastic phase, have been proposed.
Some partial results have been derived in \cite{berg2020optimal} for the two-phase scheduling problem. In very recent work, \cite{berg2021case} characterized an optimal scheduling policy, for the two-phase scheduling problem as studied in this paper, however, with two strong assumptions, i) 
the size of jobs in the elastic and in-elastic phases are exponentially distributed with the same parameters for all jobs, and are independent of each other, and ii) the job always completes when it is in its in-elastic phase. 
We avoid all these assumptions in this paper, by letting the job sizes in each phase to be arbitrary, and the first and the last phase of a job can either be elastic or in-elastic.

\subsection{Our contributions} 
For the two-phase scheduling problem, we propose an algorithm called \textsc{Fractional-LCFS} that processes a fraction of the outstanding jobs that have arrived most recently, and a subset of inelastic jobs, where each type of scheduled job is executed with equal speed. The exact choice is more refined and detailed in Section \ref{sec:algo}.  The algorithm is semi non-clairvoyant that disregards the remaining job sizes of all remaining phases (even though they are known), and only uses the information about the current phase each job is in. 

The choice of which jobs to process by the algorithm is defined by the number of jobs in each of the two-phases. Compared to the algorithm \cite{berg2021case} that always prioritises jobs that are in their in-elastic phases, our algorithm prioritises jobs that are in their in-elastic phase only when there are sufficiently many of them and the total number of jobs is less compared to the 
total number of servers. 

We show that \textsc{Fractional-LCFS} has a constant competitive ratio (derived in Theorem \ref{thm:flowtimeplusenergyonline}) that depends only on the speed-up exponent $\alpha > 1$ and not on system parameters such as the total number of jobs, and their respective sizes, and the number of servers. This result overcomes fundamental 
challenge left open in the literature for the considered problem, where speed augmentation was needed to prove constant competitiveness \cite{edmonds2000scheduling}. 
It is worth mentioning that we do not get any meaningful competitive ratio when $\alpha=1$ (fully-parallizable jobs), since 
for this case, a lower bound of $n^{1/3}$ ($n$ is the total number of jobs) on the competitive ratio is known \cite{motwani1994nonclairvoyant} for any deterministic algorithm that is unaware of the remaining sizes of the jobs, similar to the algorithm  \textsc{Fractional-LCFS}. 

We also consider the simpler setting where all jobs are available at time $0$. Similar to the online jobs arrival case, in this case also, we propose an algorithm that makes three different choices on which jobs to schedule depending on the number of jobs in the system and the number of servers. Moreover, it assigns
equal speed to all jobs that are being processed that belong to the same phase. Compared to the online jobs arrival case, we get a significantly improved competitive 
ratio bound in this simpler case provided in Theorem \ref{thm:flowtimetimezero}. It is worth recalling that  an optimal algorithm for the single phase problem where all jobs are available at time $0$ has been derived in \cite{berg2019hesrpt}, however, no such result is known for the two-phase problem.


In addition to the analytical results, we also present average-case simulation results to illustrate the actual performance of the proposed algorithm. We compare the performance of our proposed algorithm with $\mathsf{EQUI}$, the inelastic first algorithm \cite{berg2021case}, as well as the phase aware FCFS  \cite{phaseawareFCFS}, and observe that 
the performance of our algorithm is comparable or better than  $\mathsf{EQUI}$ and the inelastic first algorithm, while outperforming phase aware FCFS always.


\section{System Model}\label{sec:sysmodel}
Let there be $N$ parallel and identical servers, each with speed $1$. The set of jobs is denoted by $\cJ$, where a job $j \in \cJ$ arrives at time $a_j$. Similar to \cite{edmonds2000scheduling, berg2019hesrpt}, we consider the continuous allocation model, where $N$ is treated as a single resource block which can be divided into chunks of arbitrary sizes and allocated to different jobs.

Each job $j$ at any time can be in one of two phases, called elastic or in-elastic. The sizes of job $j$ in the $a^{th}$ elastic and $b^{th}$ in-elastic phase are $w_{je}^a$ and $w_{j\iota}^b$, respectively. Moreover, let $A_j$ and $B_j$ be the total number of elastic and in-elastic (interleaved) phases required for each job, respectively. The first and the last phase of any job  can be either of the two phases. We consider the online setting, where an algorithm has only causal information about jobs, i.e. any job's phases and their respective sizes are revealed only once it arrives.

In the elastic phase, any job is parallelizable with concave speedup, i.e.,  if job~$j$ is allotted 
$k_j(t)$ number of
servers at time $t$, then the service rate experienced by
job~$j$ at time $t$ is $s_j(t) = P(k_j(t)) = k_j(t)^{1/\alpha},$ where $\alpha > 1$. Note that with the continuous 
allocation model, it is possible that $k_j(t) < 1$. Following \cite{berg2019hesrpt, vazeperf2021, berg2021case}, however, we let $s_j(t) =  k_j(t)^{1/\alpha}$ even when $k_j(t) < 1$. 

In the in-elastic phase, each job can be processed by at most one server, and equivalently can be processed at speed of at most $1$. 
Moreover, for any job $j$, it transitions from the elastic to in-elastic phase or vice versa only when its total work $w_{je}^a$ or $w_{j\iota}^b$ in the current phase is complete.

A job $j$ is defined to be complete at time $d_j$, if $d_j$ is the earliest time at which total $\sum_{a\le A_j} w^a_{je} + \sum_{b \le B_j} w^b_{j\iota}$ amount of work has been completed for job $j$, and the objective is to minimize the flow time 
\begin{equation}\label{eq:flowtime}
\min F = \sum_{j\in \cJ} (d_j-a_j) =  \int n(t) dt \ \text{s.t.} \ \sum_{j= 1}^{A(t)} P(s_j(t)) \leq N,
\end{equation}
 where $n(t)$ is the number of outstanding jobs at time $t$, and  $A(t)$ is the set of jobs that are being processed at time $t$.

 Compared to our general system model, recently in \cite{berg2021case}, a three state Markov chain was considered for phase transitions as shown in Fig. \ref{fig:BD}, where each job arrives in either 
 the elastic or the in-elastic phase, and transitions between the two phases at fixed rates, and always exits from the in-elastic phase. Because of these strong assumptions, \cite{berg2021case} was able to identify an optimal policy that always prioritizes the jobs for scheduling that are in their in-elastic phases. With the general system model, this is no longer true, and in Section \ref{sec:algo}, we present a different algorithm and show that its competitive ratio is a constant.
 
 \begin{figure}[h]
\centering
\begin{tikzpicture}[level/.style={sibling distance=50mm/#1}]
\node[circle,draw] (s) at (0,0) {$E$};
\node[circle,draw] (a) at (2,0) {$I$};
\node[circle,draw] (b) at (4,0) {$C$};

\draw[bend right, dashed,<-]  (s) to node [below] {$ p \lambda_E$} (a);
\draw[bend right, dashed,<-]  (a) to node [above] {$\lambda_I$} (s);

\draw[bend right, dashed,<-]  (b) to node [above] {$(1-p) \lambda_E$} (a);




\end{tikzpicture}
\caption{Birth death chain for phase evolution, where $I, E, C$ represent the in-elastic, elastic and completion phases, respectively, and $\lambda$'s are the transition rates, and $0 < p < 1$ is constant.}
\label{fig:BD}
\end{figure}
\section{Metric}
We represent the optimal offline algorithm (that knows the entire job arrival sequence including the number of phases, and the respective sizes of jobs in each phase,  in advance) as $\opt$. 
Let $n(t)$ ($n_o(t)$)  be the number of outstanding jobs with an online algorithm $\cA$ ($\opt$). 
For Problem \eqref{eq:flowtime}, we will consider the metric of competitive ratio which for an online algorithm $\cA$ is defined as 
\begin{equation}\label{defn:cr}\mu_\cA  = \max_\sigma \frac{\int n(t)  dt}{\int n_o(t) dt},
\end{equation}
where $\sigma$ is the input sequence consisting of jobs set $\cJ$. 

We will propose an online algorithm $\cA$, and bound $\mu_\cA \le \kappa$, by showing that for each time instant $t$
\begin{align}\label{eq:runcond}
  n(t) +  d\Phi(t)/dt & \le \kappa  n_o(t),
\end{align}
where $\Phi(t)$ is some function called the {\bf potential function} that satisfies the boundary conditions:
\begin{itemize}
\item $\Phi(t) = 0$ initially before all job arrivals and $\Phi(\infty) = 0$.
\item $\Phi(t)$ does not increase on any job arrival or job departure with the algorithm or the $\opt$.
\end{itemize}
Integrating \eqref{eq:runcond} over time, implies that the competitive ratio of $\cA$ is at most $\kappa$.

\section{Algorithm \textsc{Fractional-LCFS}}\label{sec:algo}
In this section, we propose an algorithm that is semi non-clairvoyant, that disregards the information about the remaining job sizes of all the remaining phases, and only exploits the binary information about a job being in the elastic or the in-elastic phase, which will be compared against a clairvoyant optimal offline algorithm in terms of the competitive ratio.
At time $t$, let the outstanding number of jobs in the system be $n(t)$, and $n_\iota(t)$ be the number of jobs that are in their in-elastic phase. 
Thus, $n(t) = n_e(t)+ n_\iota(t)$, where $n_e(t)$ is the number of jobs that are in their elastic phase.

{\bf Scheduling and speed selection:} 
Let $\beta, \theta$ be constants with $0 < \theta < \beta < 1$.

Case I $\frac{N}{\beta n(t)} \le 1$: Process the $\beta n(t)$ jobs that have arrived {\bf most recently} without distinguishing between jobs that are in their elastic or in-elastic phase. \footnote{If $\beta n(t)$ is fractional, then we mean $\lceil \beta n(t) \rceil$.}  
{\bf Speed:} Each of the $\beta n(t)$ jobs are processed at equal speed 
\begin{align}\label{}
  s(t) & = P\left(\frac{N}{ \beta n(t)}\right).
\end{align}

Case II $\frac{N}{\beta n(t)} > 1$: 
IIa: If $n_\iota(t) \ge \theta n(t)$\footnote{If $\theta n(t)$ is fractional, then we mean $\lceil \theta n(t) \rceil$.} then process any $\min\{n_\iota(t),N\}$ jobs\footnote{If $N=\min\{n_\iota(t),N\}$, then pick any $N$ jobs out of total $n_\iota(t)$ jobs.}  that are in their in-elastic phase, and among the $n_e(t)$ jobs  that are in their elastic phase, 
process the $\beta n_e(t)$ that have arrived {\bf most recently}.
{\bf Speed:} 
\begin{align}\label{}
  s(t) & = \begin{cases} 1 & \text{for each of}  \min\{n_\iota(t),N\} \ \text{jobs}, \\ 
  P\left(\frac{N-\min\{n_\iota(t),N\}}{\beta n_e(t)}\right) & \text{for each of} \ \beta n_e(t) \ \text{jobs}.\end{cases}
\end{align}
IIb:If $n_\iota(t) < \theta n(t)$ Among the $\beta n(t)$ jobs that have arrived {\bf most recently},  process all the jobs that are in their elastic phases with equal speed
\begin{align}\label{}
  s(t) & =  P\left(\frac{N}{ \beta n(t)}\right). 
\end{align}
Note that in this subcase, the total speed  
constraint of $\sum_{j= 1}^{A(t)} P^{-1}(s_j(t)) \leq N$ need not be tight. Thus, for a practical implementation, few more jobs can be processed, however, 
that will not change the analysis.

By its very definition, algorithm \textsc{Fractional-LCFS} satisfies the total speed  
constraint of $$\sum_{j= 1}^{A(t)} P^{-1}(s_j(t)) \leq N,$$ as well as the speed constraint of unity for any job that is in its in-elastic phase.

The main result of this paper is as follows.
\begin{theorem}\label{thm:flowtimeplusenergyonline}
  For any $\alpha > 1$, there exists a $0 < \theta < \beta < 1$, such that the competitive ratio of algorithm \textsc{Fractional-LCFS} for Problem \eqref{eq:flowtime} is a constant (depends only on $\alpha$) and is independent of the number of jobs, their sizes, and the number of servers $N$. The exact competitive ratio 
  expression is provided in \eqref{eq:finalbound}, and using which for example in case of $\alpha=2$, we get the competitive ratio bound of $636$, choosing $\beta = \frac{1}{6}$, and $\theta= \frac{1}{72}$.\end{theorem}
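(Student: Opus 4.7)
The plan is to verify the running inequality \eqref{eq:runcond} for an appropriately chosen potential function $\Phi(t)$. Once this is done pointwise in each of the three operational regimes of \textsc{Fractional-LCFS} (Case I, IIa, IIb), integration yields a competitive ratio of $\kappa$, expressible in closed form. The design of $\Phi$ is governed by two features of the algorithm: it is LCFS-based, so the potential should grow with a job's LCFS rank, and it distinguishes between elastic and in-elastic phases, so $\Phi$ should have two additive components, one for each phase.

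\textbf{Potential function.} For each job $j$ alive in the algorithm at time $t$, I would define its rank $r_j(t)$ as the number of jobs in the algorithm's queue whose arrival time is at least $a_j$ (i.e.\ its position in LCFS order). Writing $Q_e(t)$ and $Q_\iota(t)$ for the sets of elastic and in-elastic jobs in the algorithm's queue at time $t$, and $w_{j,e}(t), w_{j,\iota}(t)$ for the corresponding remaining work, I would use a potential of the shape
\begin{equation*}
\Phi(t) \;=\; c_e \sum_{j \in Q_e(t)} r_j(t)\, g\bigl(w_{j,e}(t)\bigr) \;+\; c_\iota \sum_{j \in Q_\iota(t)} r_j(t)\, w_{j,\iota}(t),
\end{equation*}
where $g(\cdot)$ is a suitable concave function (the canonical choice when coupling LCFS analyses with concave speedup $k^{1/\alpha}$ is of $x^{1-1/\alpha}$-type), and $c_e, c_\iota >0$ are constants to be tuned jointly with $\beta, \theta$. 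The in-elastic summand is linear in remaining work because the service rate is capped at $1$, so concavity buys nothing there. Both boundary conditions are routine: on a new arrival, its rank is $1$ and the matching arrival in OPT cancels any increase in the lag component; on a departure in either system, $\Phi$ can only decrease.

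\textbf{Case analysis.} I would then bound $d\Phi/dt$ pointwise. In Case I ($n(t) \ge N/\beta$), the algorithm uses its full $N$ servers on the $\beta n(t)$ most-recent jobs at the common speed $P(N/(\beta n(t)))$; these are exactly the top-ranked jobs in LCFS order, so the drop in $\Phi$ from the algorithm's service is on the order of $\beta n(t)\cdot r_{\max}\cdot P(N/(\beta n(t)))$, which via a direct computation using $P(x)=x^{1/\alpha}$ is large enough to absorb $n(t)$ plus the at most $N$ units of service that OPT can use to increase $\Phi$. In Case IIa, the $\min(n_\iota(t),N)$ in-elastic jobs receive unit speed, causing a linear drop in the in-elastic summand of $\Phi$ of order $\theta n(t)$ (since $n_\iota(t)\ge \theta n(t)$), while the $\beta n_e(t)$ LCFS-latest elastic jobs absorb the remaining servers at elevated speed, giving a second drop in the elastic summand; together these dominate $n(t)$ after subtracting OPT's contribution, which on in-elastic work is itself capped by $n_o(t)$.

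\textbf{Main obstacle.} The critical case is Case IIb, where in-elastic jobs are not served at all, so nothing in the algorithm's actions directly shrinks the in-elastic summand of $\Phi$. However, the guard $n_\iota(t)<\theta n(t)$ forces $n_e(t)>(1-\theta)n(t)$, and since $\theta<\beta$ the algorithm's processing of the most-recent elastic jobs at speed $P(N/(\beta n(t)))$ yields a drop in the elastic summand that, via the concavity of $g$ together with the LCFS ranking, can be made to also dominate the growth in the in-elastic summand caused by OPT. Pushing this compensation through is where all the freedom in choosing $c_e/c_\iota$ together with $\beta$ and $\theta$ is consumed: balancing the three cases simultaneously fixes these constants and produces the explicit bound \eqref{eq:finalbound}, which for $\alpha=2$, $\beta=1/6$, $\theta=1/72$ evaluates to $636$.
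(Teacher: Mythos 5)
Your skeleton (potential function, running condition \eqref{eq:runcond}, case split mirroring the algorithm's Cases I/IIa/IIb, with IIb flagged as the delicate case) matches the paper, but the potential you propose would not carry the proof through, for three concrete reasons. First, your $\Phi$ depends only on the algorithm's state: there is no coupling with $\opt$'s remaining work. The paper's elastic term is $\sum_{j}\frac{r_j(t)}{P(N)Q(r_j(t))}\bigl(w_j^A(t)-w_j^o(t)\bigr)^+$ and its in-elastic term is $\sum_{A(t)}{\bar w}_{j\iota}^A(t)-\sum_{O(t)}{\bar w}_{j\iota}^o(t)$; the positive-part lag is what (i) makes $\Phi$ unchanged on an arrival (the new job has zero lag), (ii) caps $\opt$'s contribution to $d\Phi/dt$ by $c_1 n(t)Q(n_o(t))/Q(n(t))+c_2 n_o(t)$, and (iii) guarantees that at least $(\beta-\gamma)n(t)$ (resp.\ $(\beta-\theta-\gamma)n(t)$) of the jobs the algorithm is serving actually have positive lag and hence contribute negative drift. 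You invoke a ``lag component'' when discussing arrivals, but your displayed $\Phi$ has none, and without it there is no mechanism tying the drift bound to $n_o(t)$ at all. Second, you place the concavity on the remaining work, $g(w_{j,e}(t))$ with $g$ of $x^{1-1/\alpha}$ type, whereas the paper places it on the rank (the weight $r/Q(r)=r^{1/\alpha}$) and keeps the work term linear. With your choice, the algorithm's service decreases $\Phi$ at rate proportional to $g'(w_{j,e})$, which tends to $0$ for large remaining work, so the negative drift cannot be lower-bounded by anything of order $n(t)$ and the running condition fails on instances with large job sizes. Third, your rank convention (number of jobs with arrival time at least $a_j$) makes every existing job's rank increase by one on each arrival, so $\Phi$ jumps up on arrivals and violates the boundary condition; the paper deliberately counts jobs that arrived \emph{before} $j$ so that ranks are arrival-invariant.

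You are also missing the dichotomy on $n_o(t)\lessgtr\gamma n(t)$ for a parameter $\gamma$ with $\theta+\gamma<\beta$. When $n_o(t)>\gamma n(t)$ the paper discards the algorithm's (nonpositive) drift entirely and uses $n(t)\le n_o(t)/\gamma$; only when $n_o(t)\le\gamma n(t)$ can one certify that enough of the served LCFS-latest jobs are ahead of $\opt$ to produce the needed negative drift. Without this split your Case I claim that the service drop ``absorbs $n(t)$'' is not justifiable when $\opt$ has nearly as many jobs as the algorithm. The constants in \eqref{eq:finalbound} and the value $636$ for $\alpha=2$ come precisely from balancing $c_1,c_2$ against $\beta,\theta,\gamma$ across these cases, so the quantitative conclusion cannot be recovered from the proposal as written.
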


For each value of $\alpha>1$, how to choose $\beta,\theta$ such that the competitive ratio remains a constant is discussed in Remark \ref{rem:choiceofbeta}. We are prescribing only one potential choice of parameters $ \theta, \beta$ that is sufficient to make the competitive ratio as a constant, however, there is scope for choosing the parameters $ \theta, \beta$ so as to minimize the competitive ratio. Doing so 
analytically, however, remains a challenge, while easy being numerically.

\begin{remark}\label{}Our result does not result in any meaningful bound for $\alpha=1$ as expected, since the  lower bound of $n^{1/3}$ ($n$ is the total number of jobs) on the competitive ratio is known \cite{motwani1994nonclairvoyant} for $\alpha=1$ for any non-clairvoyant algorithm, as is the \textsc{Fractional-LCFS} algorithm. 
\end{remark}
\begin{remark}\label{}
  It is worth noting that the competitive ratio bound in Theorem \ref{thm:flowtimeplusenergyonline} increases as $\alpha \rightarrow 1$. The main intuition for this is that we are considering the worst case input, which includes the case where jobs have no in-elastic phases, for which as $\alpha \rightarrow 1$, SRPT is an optimal algorithm that processes only one job 
  with the least remaining size on all servers. In contrast, with \textsc{Fractional-LCFS}, potentially a large number of jobs are parallely processed with equal speed for all values of $\alpha$. 
\end{remark}

{\it Discussion:} 
Theorem \ref{thm:flowtimeplusenergyonline} shows that a simple LCFS algorithm that processes a fraction of the most recently arrived outstanding jobs, that is 
not even aware of the remaining job size (of any remaining phase) and that uses equal 
speed for jobs that are in the same phase, is constant competitive, i.e., independent 
of input parameters: number of jobs and their sizes, and the number of servers, and 
only depends on the speedup exponent $\alpha$. Even though the derived competitive ratio bound appears large, it overcomes an
old technical hurdle of it being independent of system parameters.
In prior work, either speed augmentation \cite{edmonds2000scheduling} was shown to be necessary to get similar
constant competitive ratio results, or somewhat simplistic input model 
had to be considered \cite{berg2021case}. Moreover, given the very nature of the competitive ratio metric being a multiplicative penalty, a large competitive ratio per se is not limiting, as long as it does not scale with system parameters.

The intuition as to why a fractional LCFS algorithm should perform well  is similar to that of 
the SRPT (shortest remaining processing time) algorithm that requires the knowledge of remaining job sizes. 
SRPT minimizes the number of outstanding jobs (that controls the flow time) knowing the jobs sizes, by keeping 
shorter jobs in the system for less time. Fractional LCFS on the other hand, without using the remaining job size information, 
processes a fraction of the most recently
arrived jobs, and tries to keep  longer jobs stay in the system for long, thus `effectively' prioritizing short jobs. It is easy to construct 
`bad' input sequences where this is not the case, but roughly that is what one should expect.

The speed choice made by the proposed algorithm is primarily dictated by the constructed potential function and the unity speed constraint for the in-elastic phase, so 
that the overall drift (derivative) of the potential function is sufficiently large. In particular, the speed chosen for 
jobs that are in the same phase is always identical. The intuition for the equal speed choice can be 
borrowed from \cite{edmonds2000scheduling}, that explains that if an algorithm choosing equal speed has more number of 
outstanding jobs than the $\opt$, then 
progressively, it allocates fewer servers to each job and since $\alpha >1$, it improves the utilization 
of servers. Since we also have jobs that are in their in-elastic phase, this is not precisely correct, however, provides partial 
explanation.


After dealing with the setting where jobs arrive at arbitrary times, next, we consider the simpler case when all jobs are available at time $0$ and get a better competitive ratio 
guarantee.

\section{All jobs available at time $0$} In this section, except for all jobs arriving at time $0$, everything is identical to the system model described in Section \ref{sec:sysmodel}.

\subsection{Algorithm \textsc{PA-EQUI}}
{\bf Scheduling and speed selection:} 

Case I $\frac{N}{n(t)} \le 1$: Process all the $n(t)$ (number of outstanding) jobs, without distinguishing between jobs that are in their elastic or in-elastic phase, with equal speed 
\begin{align}\label{}
  s(t) & = P\left(\frac{N}{n(t)}\right).
\end{align}

Case II $\frac{N}{n(t)} > 1$: 
IIa: For a constant $0 < \delta < 1$, if $n_\iota(t) \ge \delta n(t)$ then process all $n_\iota(t)$ jobs  that are in their in-elastic phase dedicatedly in one server with unit speed, while process the remaining  $n_e(t)$ jobs that are in their elastic phase, each with speed $P\left(\frac{N-n_\iota(t)}{ n_e(t)}\right)$.

IIb:If $n_\iota(t) < \delta n(t)$ Process all the $n_e(t)$ jobs that are in their elastic phase, each with equal speed
\begin{align}\label{}
  s(t) & =  P\left(\frac{N}{n_e(t)}\right). 
\end{align}

We name this algorithm \textsc{PA-EQUI}, since it allocates equal speed to all jobs that belong to the same phase. In contrast, \textsc{EQUI} studied in \cite{edmonds2000scheduling,edmonds2009scalably}
 is \textsc{Blind-EQUI}, since it is unaware which jobs belong to which phase, and wastes speed. 
By its very definition, algorithm \textsc{PA-EQUI} satisfies the total speed  
constraint of $\sum_{j= 1}^{A(t)} P^{-1}(s_j(t)) \leq N$, as well as the speed constraint of unity for any job that is in its in-elastic phase.

The main result of this section is as follows.
\begin{theorem}\label{thm:flowtimetimezero}
  For $\alpha >1$, the competitive ratio of \textsc{PA-EQUI} for Problem \eqref{eq:flowtime} when all jobs are available at time $0$, is at most $$\mu(\alpha) = \frac{1}{\alpha(1-\delta)-1}\left[\frac{\alpha(1-\delta)}{\delta} + \frac{\alpha(1-\delta)+\delta}{1-\delta}\right],$$
  where $\delta$ is the parameter to be chosen.
   For $\alpha=2$, choosing $\delta=\frac{1}{4}, \mu(2) = 50/3$. 
  Moreover, $\mu(\alpha)$ is a decreasing function of $\alpha > 1$ for an appropriate choice of $\delta$.
\end{theorem}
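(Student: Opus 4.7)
The plan is to instantiate the potential-function framework of Section~3 with a potential of the form
$\Phi(t) = c_e\,\bigl(R_e(t)-R_e^o(t)\bigr)^+ + c_\iota\,\bigl(R_\iota(t)-R_\iota^o(t)\bigr)^+,$
where $R_e(t), R_\iota(t)$ denote the total remaining elastic and in-elastic work (summed across all outstanding jobs and all their remaining phases) under \textsc{PA-EQUI}, and $R_e^o, R_\iota^o$ the analogous totals under $\opt$; the weights $c_e, c_\iota > 0$ are free parameters to be tuned. Because every job is released at $t = 0$ with its full workload profile, the two algorithms begin with identical $R_e, R_\iota$, so $\Phi(0) = 0$; both eventually clear all work, so $\Phi(\infty) = 0$; there are no arrivals; and completions are smooth (a job leaves only when its remaining work is already zero, so $\Phi$ is continuous at departures). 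It then suffices to verify the pointwise running condition $n(t) + d\Phi(t)/dt \le \mu(\alpha)\,n_o(t)$.

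This running condition is established by case analysis on the active branch of \textsc{PA-EQUI}. In Case~I ($n \ge N$), every job runs at the common speed $P(N/n) \le 1$, so the algorithm clears work at rate $n^{1-1/\alpha} N^{1/\alpha}$; by concavity of $P$ and the unit cap on in-elastic jobs, $\opt$'s total rate is at most $N$, and the inequality reduces to a routine power-mean comparison. In Case~IIa ($n < N$, $n_\iota \ge \delta n$), in-elastic jobs each run at unit speed and elastic jobs share the remaining $N - n_\iota$ servers equally; the lower bound $n_\iota \ge \delta n$ guarantees enough in-elastic throughput to match $\opt$'s up to a factor $1/\delta$, producing the $\alpha(1-\delta)/\delta$ summand of $\mu(\alpha)$. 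In Case~IIb ($n < N$, $n_\iota < \delta n$), \textsc{PA-EQUI} works only on elastic jobs, but $n_e \ge (1-\delta) n$ and the full $N$ servers applied to elastic work keep the drift on $R_e - R_e^o$ sufficiently negative; the compensating bookkeeping for the stalled in-elastic jobs yields the $[\alpha(1-\delta) + \delta]/(1-\delta)$ summand. Taking the worst of the three bounds and imposing $\alpha(1-\delta) > 1$ (feasibility of the denominator) gives $\mu(\alpha)$; $\alpha = 2,\,\delta = 1/4$ recovers $\mu(2) = 50/3$, and monotone decrease in $\alpha$ follows by choosing $\delta(\alpha) = \Theta(1/\alpha)$.

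The main obstacle is the simultaneous tuning of $(c_e, c_\iota, \delta)$ so that the running condition closes with the \emph{same} constant across all three cases. Case~IIb is the most delicate: since \textsc{PA-EQUI} does no in-elastic work there, $c_\iota$ must be small enough that $\opt$'s in-elastic completions cannot inflate $\Phi$ faster than the elastic drift absorbs it, yet large enough that Case~IIa still pays for the algorithm's parallel processing of in-elastic jobs. Balancing these competing demands against the EQUI-style inequality forced by Case~I is what ultimately pins down both the weights in $\Phi$ and the coefficient $\mu(\alpha)$.
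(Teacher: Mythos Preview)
Your potential is too coarse to close the running condition, and the gap shows up already in Case~I. With $\Phi = c_e(R_e-R_e^o)^+ + c_\iota(R_\iota-R_\iota^o)^+$, the algorithm's contribution to $d\Phi/dt$ is at best $-c\cdot(\text{total work rate}) = -c\,n^{1-1/\alpha}N^{1/\alpha}$, while the left side of the running condition carries $+n(t)$. For fixed $N$ and $n\gg N$ these two terms are of different orders, so no fixed constant $c$ can make $n + d\Phi/dt \le \kappa n_o$ hold uniformly. The phrase ``routine power-mean comparison'' hides exactly this mismatch between job-count units and work-rate units. Worse, because EQUI maximizes aggregate throughput under a concave speedup, one typically has $R_e(t)\le R_e^o(t)$ throughout in elastic-dominated instances, so your elastic term is identically zero and provides no negative drift at all---yet $n(t)$ can still exceed $n_o(t)$ by an arbitrarily large factor at some instant (take $m$ unit-size elastic jobs and one size-$m$ elastic job with $N=1$: an offline schedule that EQUI's only the small jobs drops to one outstanding job at time $m^{1/\alpha}$, while PA-EQUI still has all $m+1$).

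The paper's potential fixes this with two ingredients your aggregate potential lacks. First, it tracks \emph{per-job} excesses $\sum_{i}(q_i^A-q_i^o)^+$ rather than the aggregate $R-R^o$; the number of strictly positive summands is at least $(n-n_o)^+$, which is the job-count quantity you need. Second, it multiplies by the \emph{state-dependent} prefactor $P(n/N)$, chosen so that $P(n/N)\cdot s_i(t)=P(n/N)\cdot P(N/n)=1$ in Case~I. Together these yield an algorithm-side drift of exactly $-(n-n_o)^+$, which is what cancels the $+n$ on the left after choosing $c_1\ge \alpha/(\alpha-1)$. The $\opt$-side drift is then handled by the AM--GM step $P(n/N)\cdot Q(n_o)P(N)=n^{1/\alpha}n_o^{1-1/\alpha}\le \tfrac{1}{\alpha}n+(1-\tfrac{1}{\alpha})n_o$. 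Your inelastic piece $c_\iota(R_\iota-R_\iota^o)^+$ is close in spirit to the paper's $\Phi_2$ and would work for Case~IIa, but without the per-job, prefactored $\Phi_1^{sf}$ the argument cannot be completed in Cases~I and~IIb.
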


Thus, compared to the online job arrivals case (Theorem \ref{thm:flowtimeplusenergyonline}) where the competitive ratio for $\alpha=2$ is $636$,
there is a significant improvement in the competitive ratio when all jobs are available at time $0$. Similar conclusion can be drawn for other values of $\alpha$ also.
To prove Theorem \ref{thm:flowtimetimezero}, similar to the previous section, we consider the following potential function, and show that  \eqref{eq:runcond} holds for 
a particular value of $\kappa$.

{\bf Potential Function} 
At time $t$, let $A(t)$ be the set of unfinished jobs with \textsc{PA-EQUI}  where $n(t) = |A(t)|$, and for the $i^{th}$ job,  $i\in A(t)$, let $q_i(t)$ be its remaining (sum of remaining sizes of all the remaining phases) size. Then  
\begin{align}\label{}
 n^i(t,q) & = \begin{cases} 1 & \text{for} \  q\le q_i(t), \\
 0 & \text{otherwise.}
 \end{cases}
\end{align} 
Similarly, let $n_o(t)$ be the number of unfinished jobs with the $\opt$, and the corresponding quantity to $n^i(t,q)$ for the $i^{th}$ job with the $\opt$, be denoted by $n^i_o(t,q)$.

Consider the potential function
\begin{equation}\label{defn:phispecialFT}
\Phi^{sf}(t) = c_1\Phi^{sf}_1(t) + c_2\Phi_2(t),
\end{equation}
where 
\begin{equation}\label{defn:phi1specialFT} 
\Phi^{sf}_{1}(t) =  P\left(\frac{n(t)}{N}\right) \left(\sum_{i\in A(t)}\int_{0}^\infty ( n^i(t,q) - n^i_o(t,q))^+ dq\right),
\end{equation}
where $c_1$, and $c_2$ are constants to be chosen later, and $(x)^+ = \max\{0,x\}$, and $\Phi_2(t)$ is as defined in \eqref{phi2}.

Clearly, $\Phi^{sf}(t)$ satisfies the first boundary condition. 
Since all jobs are available at time $0$, which is equivalent to all arrivals happening at time $t=0$, both $\Phi^{sf}_{1}(t)=0$ and $\Phi_2(t)=0$ for $t=0$. Thus, to check whether $\Phi^{sf}(t)$ satisfies the second boundary condition, 
we only need to check whether $\Phi^{sf}(t)$ increases on a departure of a job with either the \textsc{PA-EQUI} or the $\opt$.
\begin{lemma}\label{lem:jumpequi}
Potential function $\Phi^{sf}(t)$ \eqref{defn:phispecialFT} does not increase on a departure of a job with either the \textsc{PA-EQUI} or the $\opt$.
\end{lemma}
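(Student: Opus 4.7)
The plan is to use linearity of $\Phi^{sf}$ in its two components together with the (implicit) nonnegativity of the constants $c_1,c_2$, and show that each of $\Phi^{sf}_1$ and $\Phi_2$ fails to increase at any departure event. The contribution from $\Phi_2$ is already handled by the analogous analysis performed in the online-arrival section (where the same functional form appears), so the new work is to verify the claim for $\Phi^{sf}_1$. The underlying regularity I will invoke is standard: since all processing speeds are bounded and jobs are served continuously in time, both the algorithm's remaining sizes $q_i(t)$ and $\opt$'s remaining sizes $q^o_i(t)$ are absolutely continuous in $t$ and vanish at the instant of their own completions; only the set-membership of $A(t)$ (and of $\opt$'s active set) can jump discretely, and only downward.

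First I would handle a departure of some job $j^\star$ under \textsc{PA-EQUI} at time $t$. By continuity, $q_{j^\star}(t^-)=0$, so $n^{j^\star}(t^-,q)=0$ for every $q>0$, and therefore the integrand $\bigl(n^{j^\star}(t^-,q)-n^{j^\star}_o(t^-,q)\bigr)^+$ is identically zero. Removing $j^\star$ from $A(t)$ thus deletes a vanishing term from the inner sum of $\Phi^{sf}_1$, leaving that sum unchanged. The prefactor $P(n(t)/N)$, however, strictly drops because $n(t)$ decreases by one and $P$ is strictly increasing. Hence $\Phi^{sf}_1$ strictly decreases at this event.

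Next I would handle a departure of a job $j^\star$ under $\opt$ at time $t$. Now $A(t)$ and $n(t)$ are both unaltered, so the prefactor $P(n(t)/N)$ is unchanged. For indices $i\neq j^\star$ nothing about $n^i$ or $n^i_o$ changes. For $i=j^\star$, either $j^\star\notin A(t)$, in which case it contributes nothing to the sum, or $j^\star\in A(t)$, in which case $q^o_{j^\star}(t^-)=0$ by the same continuity, so $n^{j^\star}_o(t^-,q)=0$ for all $q>0$ already and there is no jump in the integrand. In either subcase the inner sum is continuous through the departure, and so $\Phi^{sf}_1$ is unchanged. Combining the two cases with the inherited property of $\Phi_2$ yields the lemma. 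I do not expect any serious obstacle here: the only nontrivial point is the continuity of the remaining-size functions at the completion instant, and everything else is bookkeeping that exploits the $(\cdot)^+$ structure to make the vanishing integrand absorb the discrete removal from $A(t)$.
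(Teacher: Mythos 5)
Your proof is correct and follows essentially the same route as the paper: decompose $\Phi^{sf}$ into its two terms, observe that the inner integral in $\Phi^{sf}_1$ does not jump at a departure because the departing job's remaining size (with the algorithm or with $\opt$) has already reached zero continuously, note that the prefactor $P(n(t)/N)$ can only decrease (algorithm departure) or stay fixed ($\opt$ departure) while the integral is non-negative, and invoke the continuity of $\Phi_2$. Your write-up is somewhat more explicit than the paper's about why the integrand vanishes at the departure instant, but the argument is the same.
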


The proof of Lemma \ref{lem:jumpequi} is provided in Appendix \ref{app3}.
Next, we characterize the drift $d \Phi^{sf}(t)/dt$.
\begin{lemma}\label{lem:driftphisopt}
Because of the processing by the $\opt$,  the change in the potential function \eqref{defn:phispecialFT} is
\begin{align*}
d \Phi^{sf}(t)/dt & \le   c_1 \left(\frac{1}{\alpha}\right) n(t) + c_1\left(1-\frac{1}{\alpha}\right) n_o(t) + c_2 n_o(t).
\end{align*}
\end{lemma}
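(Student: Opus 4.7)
The plan is to split the drift by linearity and bound the two contributions separately. Since only $\Phi^{sf}_1(t)$ involves $n^i_o(t,q)$, OPT's processing affects just that term (the factor $P(n(t)/N)$ depends on $n(t)$, the count maintained by \textsc{PA-EQUI}, not on OPT). The contribution from $\Phi_2(t)$ should be a bound of the form $d\Phi_2/dt \le n_o(t)$ under OPT's processing, which I would cite or reuse from wherever $\Phi_2$ is introduced in \eqref{phi2}.

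For $\Phi^{sf}_1(t)$: when OPT processes job $i$ at instantaneous speed $s^o_i(t)$, the remaining-size $q^o_i(t)$ decreases at that rate, so $n^i_o(t,q)$ loses its indicator on an interval of length $s^o_i(t)\,dt$ just below $q^o_i(t)$. The integrand $(n^i(t,q)-n^i_o(t,q))^+$ can therefore grow by at most $s^o_i(t)\,dt$ per unit time, giving
\begin{equation*}
\frac{d}{dt}\sum_{i\in A(t)}\int_0^\infty (n^i(t,q)-n^i_o(t,q))^+\,dq \;\le\; \sum_i s^o_i(t).
\end{equation*}

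Next, I would bound $\sum_i s^o_i(t)$ using concavity. Let $k^o_i(t)$ be OPT's server allocation to job $i$, so $s^o_i(t)\le (k^o_i(t))^{1/\alpha}$ and $\sum_i k^o_i(t)\le N$, with the sum running over at most $n_o(t)$ jobs. By Jensen's inequality applied to $x\mapsto x^{1/\alpha}$,
\begin{equation*}
\sum_i (k^o_i(t))^{1/\alpha}\;\le\; n_o(t)\left(\frac{N}{n_o(t)}\right)^{1/\alpha}\;=\;N^{1/\alpha}\,n_o(t)^{1-1/\alpha}.
\end{equation*}
Multiplying by $c_1 P(n(t)/N)=c_1(n(t)/N)^{1/\alpha}$ gives
\begin{equation*}
c_1\frac{d\Phi^{sf}_1(t)}{dt}\;\le\; c_1\,n(t)^{1/\alpha}\,n_o(t)^{1-1/\alpha}.
\end{equation*}
I would then finish with weighted AM--GM (Young's inequality): $a^{1/\alpha}b^{1-1/\alpha}\le \tfrac{1}{\alpha}a+(1-\tfrac{1}{\alpha})b$, applied with $a=n(t)$ and $b=n_o(t)$, which produces the first two terms of the claimed bound. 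Adding the $c_2 n_o(t)$ contribution from $\Phi_2$ yields the lemma.

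The main obstacle, as I see it, is the first bounding step: one must be careful that a departure from OPT's system (i.e., $q^o_i$ hitting $0$) is already handled by Lemma \ref{lem:jumpequi} and does not contribute to the continuous drift, and also that increases in $n^i_o$ are never caused by OPT (OPT can only decrease remaining size, not increase it), so that the one-sided bound $(\cdot)^+$ really only grows through the mechanism above. Once this is articulated cleanly, the remaining steps (Jensen plus Young) are routine inequalities.
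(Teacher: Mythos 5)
Your proposal is correct and follows essentially the same route as the paper: bound the growth of the integral term by OPT's total processing speed, bound that total by $Q(n_o(t))P(N)=N^{1/\alpha}n_o(t)^{1-1/\alpha}$ via concavity (the paper packages this as Lemma~\ref{lem:optmaxspeedspecial}), multiply by the prefactor $P(n(t)/N)$ to get $n(t)^{1/\alpha}n_o(t)^{1-1/\alpha}$, and finish with weighted AM--GM, adding the unit-speed bound $d\Phi_2/dt\le n_o(t)$ for the in-elastic term. Your explicit justification of why the integrand grows by at most $s^o_i(t)$ per unit time is in fact slightly more careful than the paper's one-line assertion.
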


The proof of Lemma \ref{lem:driftphisopt} is provided in Appendix \ref{app4}.
\begin{lemma}\label{lem:driftphisalg}
Because of the processing by the algorithm \textsc{PA-EQUI}, the change in the potential function \eqref{defn:phispecialFT} is
 \begin{align} 
 d\Phi^{sf}(t)/dt & \le \begin{cases} -c_1(\max\{n(t)-n_o(t),0\} )  &  \text{if} \ \frac{N}{ n(t)} \le 1, \\
  -c_2 n_\iota(t) &  \text{if} \ \frac{N}{ n(t)} > 1 \ \text{and} \ n_\iota(t) \ge \delta n(t),\\
  - c_1 (\max\{n_e(t)-n_o(t),0\} )& \text{if} \ \frac{N}{ n(t)} > 1 \ \text{and} \ n_e(t) \ge (1-\delta) n(t).
   \end{cases}
 \end{align}
\end{lemma}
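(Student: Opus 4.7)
The plan is to decompose the drift as
\[
\frac{d\Phi^{sf}}{dt}\Big|_{\text{alg}} = c_1 \frac{d\Phi_1^{sf}}{dt}\Big|_{\text{alg}} + c_2 \frac{d\Phi_2}{dt}\Big|_{\text{alg}},
\]
and analyze each of the three cases of \textsc{PA-EQUI} by reading off the per-job speeds. For $\Phi_1^{sf}$, a direct calculation shows $\int_0^\infty (n^i(t,q) - n^i_o(t,q))^+ \, dq = (q_i(t) - q^o_i(t))^+$, where $q^o_i(t)$ denotes the remaining size of job $i$ under $\opt$. The prefactor $P(n(t)/N)$ is untouched by algorithmic processing, since $n(t)$ jumps only at arrivals and departures, which are handled by the boundary conditions via Lemma \ref{lem:jumpequi}. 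Consequently, if job $i$ is processed at speed $s_i(t)$, then $(q_i(t) - q^o_i(t))^+$ decreases at rate $s_i(t)$ exactly on the set $S(t) := \{ i \in A(t) : q_i(t) > q^o_i(t) \}$ and is frozen otherwise, giving
\[
\frac{d\Phi_1^{sf}}{dt}\Big|_{\text{alg}} = -P(n(t)/N) \sum_{\substack{i \text{ processed} \\ i \in S(t)}} s_i(t).
\]
A short counting observation then gives $|S(t) \cap X| \ge \max(|X| - n_o(t), 0)$ for any subclass $X \subseteq A(t)$: every $i \in X$ that $\opt$ has already completed satisfies $q^o_i(t) = 0 < q_i(t)$ and hence lies in $S(t)$.

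With these ingredients the three case-bounds follow readily. Case I: every job is processed at the common speed $P(N/n(t))$, so the algebraic identity $P(n(t)/N) \cdot P(N/n(t)) = 1$ combined with $|S(t)| \ge \max(n(t) - n_o(t), 0)$ yields $c_1 \, d\Phi_1^{sf}/dt|_{\text{alg}} \le -c_1 \max(n(t) - n_o(t), 0)$, while the $\Phi_2$ contribution is nonpositive and can be absorbed. Case IIb: only the $n_e(t)$ elastic jobs are processed, each at speed $P(N/n_e(t))$, so the drift equals $-(n(t)/n_e(t))^{1/\alpha} \cdot |S(t) \cap (\text{elastic})|$; since $n(t) \ge n_e(t)$ the coefficient is at least $1$, and the count is at least $\max(n_e(t) - n_o(t), 0)$, giving the stated bound. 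Case IIa is qualitatively different: the $\Phi_1^{sf}$ drift is merely nonpositive and is discarded, and the required negative term comes entirely from $\Phi_2$; the definition in \eqref{phi2} is designed so that processing any in-elastic job at unit speed contributes $-1$ to $d\Phi_2/dt$, and since \textsc{PA-EQUI} processes exactly $n_\iota(t)$ such jobs at unit speed in this case, $d\Phi_2/dt|_{\text{alg}} \le -n_\iota(t)$.

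The main obstacle is Case IIa, since it relies on the detailed shape of $\Phi_2$ in \eqref{phi2} rather than on the generic $\Phi_1^{sf}$ calculation above; one has to verify that the ``per in-elastic job'' $-1$ contribution persists in the presence of the concurrent elastic-phase processing also taking place in this case. A minor bookkeeping point is that in Case I the in-elastic jobs (if any) are being processed at sub-unit speed $P(N/n(t)) \le 1$ and thus also decrease $\Phi_2$; since this only strengthens the bound, it can be discarded to match the clean piecewise form stated in the lemma.
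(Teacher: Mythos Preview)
Your proposal is correct and follows essentially the same route as the paper's proof: decompose into the $\Phi_1^{sf}$ and $\Phi_2$ pieces, use the counting observation that at least $\max(|X|-n_o(t),0)$ of the jobs in any processed set $X\subseteq A(t)$ satisfy $q_i(t)>q_i^o(t)$, and exploit the cancellation $P(n(t)/N)\cdot P(N/n(t))=1$ (or the inequality $P(n(t)/N)\cdot P(N/n_e(t))\ge 1$ in Case~IIb), while in Case~IIa discarding the nonpositive $\Phi_1^{sf}$ drift and reading off $d\Phi_2/dt\le -n_\iota(t)$ from unit-speed processing of all $n_\iota(t)$ in-elastic jobs. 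Your worry about elastic processing interfering with $\Phi_2$ in Case~IIa is unfounded, since $\bar w_{j\iota}^A(t)$ is unchanged while job $j$ is in an elastic phase; otherwise the arguments coincide.
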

The proof of Lemma \ref{lem:driftphisalg} is provided in Appendix \ref{app5}.
With these preliminaries, we are ready to prove  Theorem \ref{thm:flowtimetimezero}.
\begin{proof}[Proof of Theorem \ref{thm:flowtimetimezero}]
Case I : $\max\{n(t)-n_o(t),0\}  =0.$ In this case, we only count the $\opt$'s contribution to $d\Phi^{sf}(t)/dt$ from Lemma \ref{lem:driftphisopt}, since $d\Phi^{sf}(t)/dt$ because 
of \textsc{PA-EQUI}'s processing is always non-positive. Thus, we can write   \eqref{eq:runcond}, as 
\begin{align}\nn
n(t)  +   d\Phi^{sf}(t)/dt& \le n(t)   + c_1 \left(\frac{1}{\alpha}\right) n(t) + c_1\left(1-\frac{1}{\alpha}\right) n_o(t) + c_2 n_o(t), \\ \nn
& \le n_o(t)(1+c_1+c_2), \end{align}
where the second inequality follows since $n(t) \le n_o(t)$.

Case II: $n_o(t)> 0$,  $\max\{n(t)-n_o(t),0\} = n(t)-n_o(t)$, and $n(t) \ge N$.
Using Lemma \ref{lem:driftphisopt} and \ref{lem:driftphisalg}, we can write \eqref{eq:runcond},
\begin{align}\nn
 n(t)  +   d\Phi^{sf}(t)/dt& \le n(t)   + c_1 \left(\frac{1}{\alpha}\right) n(t) + c_1\left(1-\frac{1}{\alpha}\right) n_o(t) + c_2 n_o(t) -c_1(\max\{n(t)-n_o(t),0\} ), \\  \nn
  & \le n(t)\left(1+ c_1\left(\frac{1}{\alpha}\right)-c_1 \right) + n_o(t)\left(c_1\left(1+\left(1-\frac{1}{\alpha}\right)\right)+c_2\right),\\  \label{eq:runcond2}
  & \le n_o(t)\left(c_1\left(1+\left(1-\frac{1}{\alpha}\right)\right)+c_2\right),
\end{align}
for $ c_1\ge 1/\left(1-\left(\frac{1}{\alpha}\right)\right)$.

Case II: $n_o(t)> 0$,  $\max\{n(t)-n_o(t),0\} = n(t)-n_o(t)$,  $n(t) < N$  and $n_\iota(t) \ge \delta n(t)$.
Using Lemma \ref{lem:driftphisopt} and \ref{lem:driftphisalg}, we can write \eqref{eq:runcond},
\begin{align}\nn
 n(t)  +   d\Phi^{sf}(t)/dt& \le n(t)   + c_1 \left(\frac{1}{\alpha}\right) n(t) + c_1\left(1-\frac{1}{\alpha}\right) n_o(t) + c_2 n_o(t) -c_2 n_\iota(t), \\  \nn
  & \stackrel{(a)}\le n(t)\left(1+ c_1\left(\frac{1}{\alpha}\right)- \delta c_2 \right) + n_o(t)\left(c_1\left(1-\frac{1}{\alpha}\right)+c_2\right),\\  \label{eq:runcond2}
  & \le n_o(t)\left(c_1\left(1-\frac{1}{\alpha}\right)+c_2\right),
\end{align}
where $(a)$ follows since $n_\iota(t) \ge \delta n(t)$, and the final inequality follows
for $\delta c_2 \ge 1+\frac{c_1}{\alpha}$.

Case III: $n_o(t)> 0$,  $\max\{n(t)-n_o(t),0\} = n(t)-n_o(t)$,  $n(t) < N$  and $n_e(t) \ge (1-\delta) n(t)$.

Case IIIa: $\max\{n_e(t)-n_o(t),0\}  =0.$
 In this case, we only count the $\opt$'s contribution to $d\Phi^{sf}(t)/dt$ from Lemma \ref{lem:driftphisopt} since $d\Phi^{sf}(t)/dt$ because 
of \textsc{PA-EQUI} is always non-positive. Thus, we can write   \eqref{eq:runcond}, as 
\begin{align}\nn
n(t)  +   d\Phi^{sf}(t)/dt& \le n(t)   + c_1 \left(\frac{1}{\alpha}\right) n(t) + c_1\left(1-\frac{1}{\alpha}\right) n_o(t) + c_2 n_o(t), \\ \nn
& \le n_o(t)\left(\frac{1}{(1-\delta)}+c_1\left(1+\frac{\frac{1}{1-\delta}-1}{\alpha}\right)+c_2\right), \end{align}
where the second inequality follows since $n(t) \le \frac{n_o(t)}{1-\delta}$ as $n_e(t) \ge (1-\delta) n(t)$ and $\max\{n_e(t)-n_o(t),0\}  =0$.

Case IIIb: 
$\max\{n_e(t)-n_o(t),0\}  =n_e(t)-n_o(t).$
\begin{align}\nn
 n(t)  +   d\Phi^{sf}(t)/dt& \le n(t)   + c_1 \left(\frac{1}{\alpha}\right) n(t) + c_1\left(1-\frac{1}{\alpha}\right) n_o(t) + c_2 n_o(t) +c_1\left(-\max\left\{n_e(t)-n_o(t),0\right\}\right), \\  \nn
  & \stackrel{(a)}\le n(t)\left(1+ c_1\left(\frac{1}{\alpha}\right)-(1-\delta) c_1 \right) + n_o(t)\left(c_1\left(1+\left(1-\frac{1}{\alpha}\right)\right)+c_2\right),\\  \label{eq:runcond2}
  & \le n_o(t)\left(c_1\left(1+\left(1-\frac{1}{\alpha}\right)\right)+c_2\right),
\end{align}
where $(a)$ follows since $n_e(t) \ge (1-\delta) n(t)$, and the final inequality follows for
$ c_1\ge \alpha/\left(\alpha(1-\delta)-1\right)$.

When $n_o(t) =0$, then we get that \eqref{eq:runcond} holds with a smaller constant $\kappa$.

Combining, all the conditions, we get that the competitive ratio is at most $\left(\frac{1}{1-\delta}+c_1\left(1+\frac{\frac{1}{1-\delta}-1}{\alpha}\right)+c_2\right)$ where 
$c_1\ge \alpha/\left(\alpha(1-\delta)-1\right)$ and $\delta c_2 \ge 1+\frac{c_1}{\alpha}$. Thus, the tightest bound is

$$\left(\frac{1}{1-\delta}+c_1\left(1+\frac{\frac{1}{1-\delta}-1}{\alpha}\right)+c_2\right)$$ $$ \le \frac{1}{\alpha(1-\delta)-1}\left[\frac{\alpha(1-\delta)}{\delta} + \frac{\alpha(1-\delta)+\delta}{1-\delta}\right] .$$
\end{proof} 

%
%
%
%
\section{Numerical results}\label{sec:sim}
In this section, we present simulation results for the mean flow time (per job). We compare the performance of the proposed algorithm \textsc{Fractional-LCFS} with other known algorithms such as inelastic first \textsc{IF} \cite{berg2021case},  \textsc{EQUI} \cite{edmonds2000scheduling} and phase-aware FCFS \textsc{PA-FCFS} \cite{phaseawareFCFS}. 
With \textsc{PA-FCFS}, jobs are processed in the order in which they arrive, and the earliest arrived job is processed by as many servers as possible, i.e. if a job is in its inelastic phase then one server is allocated and other jobs are considered similarly over the remaining number of servers, while if a job is in its elastic phase then all the available servers are allocated to that.

For all simulations, we use $\alpha=2$.
In Fig. \ref{fig:basic}, we let the number of servers to be $N=10$, and consider a slotted time system, and plot the per-job flow time as a function of the per-slot mean arrival rate $\textsf{arr}$, where in each slot, 
the number of jobs arriving is Poisson distributed with the respective $\textsf{arr}$. 
For each job, the first/last phase is equally likely to be an elastic/in-elastic phase, and the number of phases of each job is Poisson distributed with mean $7$. The choice of $7$ is dictated by real-world datasets \cite{d1}.
For each phase, each job's size is exponentially distributed with mean $5$. For each iteration, we generate jobs for $1000$ slots, and count its flow time, and iterate over 1000 iterations. For \textsc{Fractional-LCFS}, we choose $\theta = \frac{1}{4}$. 
For all the results, we compare the performance of different algorithms for the same realization of random variables, and then average it out.

As we see from Fig. \ref{fig:basic}, the performance of \textsc{Fractional-LCFS} is similar to the inelastic first \textsc{IF} and the  \textsc{EQUI} \cite{edmonds2000scheduling} algorithm, however, the mean flow time of the  \textsc{PA-FCFS} is approximately $2$ or $3$ times larger than that of the other algorithms. With 
$\alpha=2$, the limitation of \textsc{PA-FCFS} is that whenever the earliest arrived job in its elastic phase, the speed dedicated to it is $N^{1/2}$ and no other job is processed. All the other three algorithms, in contrast, process multiple jobs with total speed roughly equal to $n (N/n)^{1/2}$ ($n$ is the number of outstanding jobs), thus having a far better performance.

Next, we repeat the simulations with setting of Fig. \ref{fig:basic} with increased number of servers $N=100$ in Fig. \ref{sim:100} to demonstrate the effect of load (the ratio of the mean per-slot job arrival rate and the number of servers) on the mean flow time. With $N=100$, the performance of \textsc{IF} is much improved since with large number of servers, the possibility that 
an in-elastic job blocks sufficiently many elastic jobs becomes smaller.  \textsc{Fractional-LCFS} with $\beta=1$ continues to outperform all other algorithms as shown in Fig. \ref{fig:basic}.

In Fig. \ref{fig:beta}, we plot the performance of \textsc{Fractional-LCFS} for different choices of $\beta$ with $\theta =1/4$ for mean per-slot arrival rate of $10$, and the rest of 
settings are the same as in Fig. \ref{fig:basic}. In the theoretical result we showed that for $\alpha=2$ with 
$\beta=1/6$ and $\theta = 1/72$, the competitive ratio of \textsc{Fractional-LCFS} is a constant. From Fig. \ref{fig:beta} we observe that in fact the performance of \textsc{Fractional-LCFS} improves by choosing larger values of $\beta$ and $\theta$, and the choice of $\beta=1/6$ and $\theta = 1/72$ was needed only for 
theoretical purposes. Fig. \ref{fig:beta} shows that $\beta=1$ has the best performance among 
different choices of $\beta$ for \textsc{Fractional-LCFS}.

Finally, in Fig. \ref{sim:worstcase}, we plot the mean flow time (per job) of the considered algorithms with $N=10$ for an arbitrary input where for each job the number of phases is $8$, and  the job size profile for each job is $[1, 10, 1, 10, 1, 10, 1, 10]$, where the first phase is elastic or in-elastic with equal probability. We see that in this case, the performance of \textsc{IF} deteriorates on account of highly heterogenous job sizes in the elastic and inelastic phases, while \textsc{EQUI} has similar performance to \textsc{Fractional-LCFS}.

\begin{figure}
\centering
\begin{tikzpicture}
    \begin{axis}[
        width  = 1*\textwidth,
        height = 8cm,
        major x tick style = transparent,
        ybar,
        bar width=10pt,
        ymajorgrids = true,
        ylabel = {mean flowtime},
        symbolic x coords={$\textsf{arr}=5$, $\textsf{arr}=7$, $\textsf{arr}=9$, $\textsf{arr}=11$},
        xtick = data,
        scaled y ticks = false,
        legend cell align=left,
        legend style={
                at={(.4,.6)},
                anchor=south east,
                column sep=1ex}
    ]

     \addplot[style={bblue,fill=black,mark=none}]
            coordinates {($\textsf{arr}=5$, 99.5204) ($\textsf{arr}=7$,111.2413) ($\textsf{arr}=9$,120.9422) ($\textsf{arr}=11$, 130.1422)};

        \addplot[style={bblue,fill=bblue,mark=none}]
            coordinates {($\textsf{arr}=5$, 107.0254) ($\textsf{arr}=7$,118.2318) ($\textsf{arr}=9$,129.0677) ($\textsf{arr}=11$,139.3755)};

        \addplot[style={bblue,fill=rred,mark=none}]
            coordinates {($\textsf{arr}=5$, 126.4096) ($\textsf{arr}=7$,149.7868) ($\textsf{arr}=9$, 172.3358) ($\textsf{arr}=11$,196.3062)};

        \addplot[style={bblue,fill=ggreen,mark=none}]
            coordinates {($\textsf{arr}=5$, 124.5763) ($\textsf{arr}=7$,145.7353) ($\textsf{arr}=9$,170.8807) ($\textsf{arr}=11$,190.3415)};

        \addplot[style={bblue,fill=ppurple,mark=none}]
            coordinates {($\textsf{arr}=5$, 286.8909) ($\textsf{arr}=7$,349.4728) ($\textsf{arr}=9$,416.9715) ($\textsf{arr}=11$,  481.4032)};
%

        \legend{\textsc{Fractional-LCFS} $\beta=1$, \textsc{Fractional-LCFS} $\beta=3/4$, \textsc{IF}, \textsc{EQUI}, \textsc{PA-FCFS}}
    \end{axis}
\end{tikzpicture}
\caption{Comparison of mean flow time with different algorithms as a function of mean arrival rate per slot with $10$ servers.}
\label{fig:basic} 
\end{figure}

\begin{figure}
\centering
\begin{tikzpicture}
    \begin{axis}[
        width  = 1*\textwidth,
        height = 8cm,
        major x tick style = transparent,
        ybar,
        bar width=10pt,
        ymajorgrids = true,
        ylabel = {mean flowtime},
        symbolic x coords={$\textsf{arr}=5$, $\textsf{arr}=7$, $\textsf{arr}=9$, $\textsf{arr}=11$},
        xtick = data,
        scaled y ticks = false,
        legend cell align=left,
        legend style={
                at={(.4,.6)},
                anchor=south east,
                column sep=1ex}
    ]

     \addplot[style={bblue,fill=black,mark=none}]
            coordinates {($\textsf{arr}=5$, 37.1859) ($\textsf{arr}=7$,39.3274) ($\textsf{arr}=9$,41.4600) ($\textsf{arr}=11$,  43.5284)};

        \addplot[style={bblue,fill=bblue,mark=none}]
            coordinates {($\textsf{arr}=5$, 38.7956) ($\textsf{arr}=7$,40.8244) ($\textsf{arr}=9$,43.4138) ($\textsf{arr}=11$,45.7484)};

        \addplot[style={bblue,fill=rred,mark=none}]
            coordinates {($\textsf{arr}=5$, 37.5140) ($\textsf{arr}=7$,39.3714) ($\textsf{arr}=9$,41.6933) ($\textsf{arr}=11$, 43.9288)};

        \addplot[style={bblue,fill=ggreen,mark=none}]
            coordinates {($\textsf{arr}=5$, 38.4812) ($\textsf{arr}=7$,39.9371) ($\textsf{arr}=9$,41.4168) ($\textsf{arr}=11$,42.7503)};

        \addplot[style={bblue,fill=ppurple,mark=none}]
            coordinates {($\textsf{arr}=5$, 97.3739) ($\textsf{arr}=7$,117.0967) ($\textsf{arr}=9$,136.0010) ($\textsf{arr}=11$, 153.7302)};
%

        \legend{\textsc{Fractional-LCFS} $\beta=1$, \textsc{Fractional-LCFS} $\beta=3/4$, \textsc{IF}, \textsc{EQUI}, \textsc{PA-FCFS}}
    \end{axis}
\end{tikzpicture}
\caption{Comparison of mean flow time with different algorithms as a function of mean arrival rate per slot with $100$ servers.}
\label{sim:100} 
\end{figure}

\begin{figure}
\centering
\begin{tikzpicture}
\begin{axis}[ylabel= mean flowtime,
    symbolic x coords={$\beta=1$, $\beta=\frac{3}{4}$, $\beta=\frac{2}{3}$, $\beta=\frac{1}{2}$, $\beta=\frac{1}{3}$},
    xtick=data]
    \addplot[ybar,fill=blue] coordinates {
        ($\beta=1$,139.2442)
        ($\beta=\frac{3}{4}$, 151.5209)
        ($\beta=\frac{2}{3}$,158.0990)
        ($\beta=\frac{1}{2}$,173.7982)
         ($\beta=\frac{1}{3}$,203.2104)
    };
\end{axis}
\end{tikzpicture}
\caption{Comparison of flow time for \textsc{Fractional-LCFS} with different choices of $\beta$ for $\theta = 1/4$ with mean per-slot arrival rate of $10$ with $10$ servers.}
\label{fig:beta}
\end{figure}
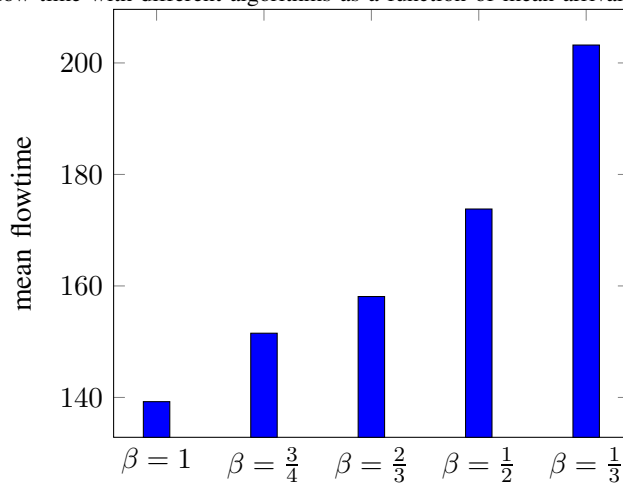

\begin{figure}
\centering
\begin{tikzpicture}
    \begin{axis}[
        width  = 1*\textwidth,
        height = 8cm,
        major x tick style = transparent,
        ybar,
        bar width=10pt,
        ymajorgrids = true,
        ylabel = {mean flowtime},
        symbolic x coords={$\textsf{arr}=5$, $\textsf{arr}=7$, $\textsf{arr}=9$, $\textsf{arr}=11$},
        xtick = data,
        scaled y ticks = false,
        legend cell align=left,
        legend style={
                at={(.4,.6)},
                anchor=south east,
                column sep=1ex}
    ]

     \addplot[style={bblue,fill=black,mark=none}]
            coordinates {($\textsf{arr}=5$, 91.1526) ($\textsf{arr}=7$,100.5989) ($\textsf{arr}=9$,110.1661) ($\textsf{arr}=11$,  119.3765)};

        \addplot[style={bblue,fill=bblue,mark=none}]
            coordinates {($\textsf{arr}=5$, 92.5731) ($\textsf{arr}=7$,102.5448) ($\textsf{arr}=9$,112.0136) ($\textsf{arr}=11$,120.2527)};

        \addplot[style={bblue,fill=rred,mark=none}]
            coordinates {($\textsf{arr}=5$, 120.4161) ($\textsf{arr}=7$,143.3853) ($\textsf{arr}=9$,169.8926) ($\textsf{arr}=11$, 197.4343)};

        \addplot[style={bblue,fill=ggreen,mark=none}]
            coordinates {($\textsf{arr}=5$, 98.5331) ($\textsf{arr}=7$,108.5039) ($\textsf{arr}=9$,117.3629) ($\textsf{arr}=11$,136.5428)};

        \addplot[style={bblue,fill=ppurple,mark=none}]
            coordinates {($\textsf{arr}=5$, 229.0917) ($\textsf{arr}=7$,275.4928) ($\textsf{arr}=9$,329.8866) ($\textsf{arr}=11$, 381.6000)};
%

        \legend{\textsc{Fractional-LCFS} $\beta=1$, \textsc{Fractional-LCFS} $\beta=3/4$, \textsc{IF}, \textsc{EQUI}, \textsc{PA-FCFS}}
    \end{axis}
\end{tikzpicture}
\caption{Comparison of mean flow time with different algorithms as a function of mean arrival rate per slot with $10$ servers for the arbitrary job sizes.}
\label{sim:worstcase} 
\end{figure}

%
%

\section{Conclusions}
In this paper, we considered an important problem of flow time minimization in  data centers, where jobs migrate between two phases of parallelizability (called  elastic and in-elastic) multiple times. In the 
elastic phase, there is flexibility of parallelizing the job over multiple servers, while in the in-elastic phase, the job has to be processed by a single server. 
Moreover, in the elastic phase there is  limited parallelizability, and the speed increment diminishes as more and more servers are allocated to any job.
We considered the online setting, where jobs arrive over time with arbitrary sizes and arrival times, and proposed a LCFS type algorithm for scheduling,  that 
processes the  scheduled jobs with equal speed. We showed that  its competitive ratio is a constant that only depends on the speed-up exponent $\alpha$ as long as $\alpha>1$. 
In recent work, this model has been considered, however, for a specific stochastic input, where the size of the job in both the elastic and the in-elastic phase was exponentially distributed with identical parameters in the two phases, and 
a job always departed on completion of some in-elastic phase.
With the specific stochastic input, always scheduling as many  jobs that are in their in-elastic phases was shown to be optimal. 
With arbitrary input, our result overcomes fundamental difficulty found in literature where similar results were shown only in the presence of resource augmentation, by exploiting the 
specific structure of the problem with just two phases that is practically well motivated. We also considered the 
case when all jobs are available at time $0$, and for which a different algorithm has significantly 
better competitive ratio than the online jobs arrival case.
\bibliographystyle{IEEEtran}
\bibliography{refs}
  \appendix
  \section{}\label{app0}
  \subsection{Proof of Theorem \ref{thm:flowtimeplusenergyonline}} 

From here on we refer to algorithm \textsc{Fractional-LCFS} as just algorithm.
Let at time $t$, the set of outstanding (unfinished) number of jobs with the algorithm be $A(t)$ with $n(t) = |A(t)|$. Similarly, let $O(t)$ be the set of outstanding jobs with the $\opt$ at time $t$.
Let at time $t$, the {\bf rank} $r_j(t)$ of a job $j \in A(t)$  be equal to the number of outstanding jobs of $A(t)$ with the algorithm that have arrived before job $j$. Note that the rank of a job does not change on arrival of a new job, but can change if a job departs that had arrived earlier.

Let $Q(x) = \frac{x}{P(x)}$. which specializes to  $Q(x) = x^{1-\frac{1}{\alpha}}$ for $P(x) = x^{1/\alpha}$. Moreover, let $\{x\}^+ = \max\{x,0\}$.
Then we consider the following potential function 
\begin{equation}\label{defn:phi}
\Phi(t) = c_1 \Phi_1(t) + c_2\Phi_2(t), \end{equation}
where \begin{equation}\label{phi1}
\Phi_1(t) = \sum_{j\in A(t)} \frac{r_j(t)}{P(N)Q(r_j(t))} \left(w_j^A(t) - w_j^o(t)\right)^+,
\end{equation}
and
\begin{equation}\label{phi2}
\Phi_2(t)= \sum_{j\in A(t)} {\bar w}_{j\iota}^A(t) - \sum_{j\in O(t)} {\bar w}_{j\iota}^o(t),
\end{equation}
where $w_j^A(t)$ ($w_j^o(t)$) is the remaining size (sum of the job sizes of all the remaining elastic and in-elastic phases) of job $j$ with the algorithm ($\opt$) at time $t$, while ${\bar w}_{j\iota}^A(t)$ (${\bar w}_{j\iota}^o(t)$) is the sum of the remaining size of job $j$ in all its remaining in-elastic phases  with the algorithm ($\opt$) at time $t$, and $c_1,c_2$ are constants to be chosen later.  

We next show that the potential function $\Phi(t)$ satisfies the second boundary condition. The fact that the first boundary condition is satisfied is trivial.
\begin{lemma}\label{lem:jumplcfs}
Potential function $\Phi(t)$ \eqref{defn:phi} does not change on arrival of any new job. Moreover, on a departure of a job with the algorithm or the $\opt$, the potential function $\Phi(t)$ \eqref{defn:phi} does not increase.
\end{lemma}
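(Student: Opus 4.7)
The plan is to dispose of the lemma by a case analysis over the three event types that can cause $\Phi$ to jump: arrival of a new job, completion of a job with the algorithm, and completion of a job with the $\opt$. Since $c_1, c_2 > 0$, I would treat $\Phi_1$ and $\Phi_2$ separately within each case.

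For an arrival at time $t = a_j$, the new job is inserted with remaining sizes matching the $\opt$'s: $w_j^A(a_j) = w_j^o(a_j)$ and $\bar w_{j\iota}^A(a_j) = \bar w_{j\iota}^o(a_j)$, so its contribution to $\Phi_1$ is $0$ and its contributions to the two sums in $\Phi_2$ cancel exactly. Because the newcomer is the most recent arrival, no previously outstanding job has its rank $r_k$ altered, and no other term in $\Phi_1$ moves. I therefore expect $\Phi$ to be in fact unchanged at arrivals, not merely non-increasing.

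For a completion at time $t = d_j^A$ with the algorithm, the departing job's remaining work (total and in-elastic) tends to zero as $t \uparrow d_j^A$, so its terms in $\Phi_1$ and $\Phi_2$ vanish continuously and its removal from $A(t)$ strips only zero quantities. The only nontrivial effect is that each $k \in A(t)$ which arrived after $j$ has its rank $r_k$ drop by one. Here the key observation is that the coefficient $\frac{r}{P(N)\, Q(r)} = \frac{r^{1/\alpha}}{P(N)}$ is nondecreasing in $r$, so the surviving $\Phi_1$-terms move weakly downward while their $(w_k^A - w_k^o)^+$ factors are untouched. This rank-shift is the single technical moment of the proof and is the step I identify as the main obstacle — the rest of the argument is bookkeeping.

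For a completion with the $\opt$ at $t = d_j^o$, the ranks $r_k$ are computed purely over $A(t)$ and are oblivious to $\opt$ departures, while $w_j^o(d_j^o) = 0$ and $\bar w_{j\iota}^o(d_j^o) = 0$ enter the formulas continuously. Hence $(w_j^A - w_j^o)^+$ evolves without a jump, and the second sum of $\Phi_2$ drops an already-zero term. Combining the three cases proves both halves of the lemma.
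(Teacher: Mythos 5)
Your proposal is correct and follows essentially the same route as the paper's own proof: the new arrival contributes zero to both $\Phi_1$ and $\Phi_2$ and leaves all ranks unchanged; an algorithm departure only lowers ranks by one, and since $r/Q(r)=r^{1/\alpha}$ is nondecreasing this cannot increase $\Phi_1$; and $\opt$ departures enter only through the continuously vanishing quantities $w_j^o$ and ${\bar w}_{j\iota}^o$. The one step you flag as the technical crux (monotonicity of the rank coefficient) is exactly the step the paper relies on, so nothing further is needed.
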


The proofs of Lemma \ref{lem:jumplcfs} and \ref{lem:optphi} are provided in Appendix \ref{app1}.

We next bound the drift $d\Phi(t)/dt$ because of the processing by the $\opt$,  and the algorithm, respectively. To avoid cumbersome notation, we write $\beta n(t)$  or $\theta n(t)$ instead of $\lceil \beta n(t)\rceil$ or $\lceil \theta n(t)\rceil$everywhere.

%

\begin{lemma}\label{lem:optphi}
The change in the potential function \eqref{defn:phi} because of the $\opt's$ contribution
 \begin{align} 
 d\Phi(t)/dt & \le 
 c_1 n(t) \frac{Q(n_o(t))}{Q(n(t))}  + c_2n_o(t).\end{align}
\end{lemma}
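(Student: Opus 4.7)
The plan is to split the continuous-time drift of $\Phi(t) = c_1\Phi_1(t) + c_2\Phi_2(t)$ under OPT's processing into separate contributions from $\Phi_1$ and $\Phi_2$ and bound each. Jumps at job arrivals or departures are already handled by Lemma \ref{lem:jumplcfs}, so only the smooth processing of OPT (at speeds $s_j^o(t)$ satisfying $\sum_{j\in O(t)}(s_j^o(t))^\alpha\le N$, with $s_j^o(t)\le 1$ whenever $j$ is in an in-elastic phase) needs to be quantified.

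For $\Phi_1$, first I would simplify the coefficient using $Q(x)=x^{1-1/\alpha}$, so that
$$\frac{r_j(t)}{P(N)\,Q(r_j(t))} = \frac{P(r_j(t))}{P(N)}.$$
When OPT processes job $j$ at speed $s_j^o(t)$, the remaining size $w_j^o(t)$ decreases at rate $s_j^o(t)$, so the quantity $(w_j^A(t)-w_j^o(t))^+$ grows at rate at most $s_j^o(t)$ (and only if positive, which only happens for $j\in A(t)\cap O(t)$). Thus OPT's contribution to $d\Phi_1/dt$ is bounded by $\sum_{j\in A(t)\cap O(t)}\frac{P(r_j(t))}{P(N)} s_j^o(t)$. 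I would now apply H\"older's inequality with conjugate exponents $\alpha/(\alpha-1)$ and $\alpha$, using (i) $r_j(t)\le n(t)$, so each coefficient is at most $(n(t)/N)^{1/\alpha}$; (ii) there are at most $n_o(t)$ indices with $s_j^o(t)>0$; and (iii) $\sum_j (s_j^o(t))^\alpha \le N$. This yields
$$\sum_{j}\frac{P(r_j(t))}{P(N)} s_j^o(t) \le \Bigl(n_o(t)\cdot (n(t)/N)^{1/(\alpha-1)}\Bigr)^{(\alpha-1)/\alpha}\cdot N^{1/\alpha} = n_o(t)^{1-1/\alpha}\, n(t)^{1/\alpha} = n(t)\,\frac{Q(n_o(t))}{Q(n(t))},$$
which gives the first term of the claimed bound after multiplying by $c_1$.

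For $\Phi_2$, OPT's contribution is $-\frac{d}{dt}\sum_{j\in O(t)}\bar w_{j\iota}^o(t)$, i.e., the total instantaneous rate at which OPT processes in-elastic work. Because every in-elastic job is processed at speed at most $1$ and there are at most $n_o(t)$ outstanding jobs with OPT, this rate is at most $n_o(t)$. Multiplying by $c_2$ yields the second term of the claimed bound. Adding both contributions gives the stated inequality.

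The main obstacle I expect is the careful application of H\"older: one must identify the correct index set (jobs $j$ with $s_j^o(t)>0$, which necessarily lie in $A(t)\cap O(t)$ for $\Phi_1$'s contribution to be nonzero), combine the bound on the number of such indices with the per-index bound on the coefficient, and keep track of the exponents so that the $N^{1/\alpha}$ factors cancel cleanly into the form $n(t)\,Q(n_o(t))/Q(n(t))$. The rest is routine bookkeeping, since OPT's in-elastic speed cap of unity transparently yields the $c_2 n_o(t)$ contribution.
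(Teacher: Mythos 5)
Your proof is correct and follows essentially the same route as the paper: the same split into the $\Phi_1$ and $\Phi_2$ contributions, the same unit-speed cap giving $c_2 n_o(t)$, and the same bound $n(t)^{1/\alpha}n_o(t)^{1-1/\alpha}$ for the $\Phi_1$ term. Your one-step H\"older application simply repackages the paper's two-step argument (bounding each coefficient $\frac{r_j(t)}{P(N)Q(r_j(t))}$ by its maximum at $r_j=n(t)$, then bounding the total OPT speed over at most $\min\{n(t),n_o(t)\}$ jobs by $Q(n_o(t))P(N)$ via concavity), and yields the identical estimate.
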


 \begin{lemma}\label{lem:algphi}
 With $0 < \theta +  \gamma < \beta$, for any $t$ where $n_o(t) \le \gamma n(t)$, the change in the potential function \eqref{defn:phi} because of the algorithm's contribution is $ d\Phi(t)/dt $
 \begin{align} 
& \le \begin{cases} - c_1 \frac{(1-\beta) (\beta - \gamma) n(t) }{P(\beta)} &  \text{if} \ \frac{N}{\beta n(t)} \le 1, \\
  -c_2 \min\{N, n_\iota(t)\} &  \text{if} \ \frac{N}{\beta n(t)} > 1 \ \text{and} \ n_\iota(t) \ge \theta n(t),\\
  - c_1 \frac{(1-\beta) (\beta -\theta- \gamma) n(t) }{P(\beta)},& \text{otherwise}.
   \end{cases}
 \end{align}
\end{lemma}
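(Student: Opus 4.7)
The plan is to compute the drift $d\Phi/dt=c_1\,d\Phi_1/dt+c_2\,d\Phi_2/dt$ coming purely from the algorithm's processing, i.e.\ holding $w_j^o(t)$ and $n_o(t)$ fixed. Since $Q(x)=x/P(x)$, the coefficient multiplying $(w_j^A-w_j^o)^+$ in $\Phi_1$ collapses to $P(r_j(t))/P(N)$; thus each processed job $j$ with $w_j^A(t)>w_j^o(t)$ shrinks $\Phi_1$ at rate $(P(r_j(t))/P(N))\,s_j(t)$, while $\Phi_2$ shrinks at a rate equal to the total speed assigned by the algorithm to jobs currently in an in-elastic phase (via the $\sum_{j\in A(t)}\bar w_{j\iota}^A(t)$ term). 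Any other contribution is nonpositive and will be safely discarded for upper-bounding purposes.

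Before splitting into cases I would record two universal facts. Fact A: since $n_o(t)\le\gamma n(t)$, at most $\gamma n(t)$ outstanding jobs in $A(t)$ have $w_j^o(t)>0$, so out of any set of $m$ processed jobs at least $m-\gamma n(t)$ satisfy $w_j^A>w_j^o$. Fact B: any job among the $\beta n(t)$ most recently arrived outstanding jobs has rank $r_j(t)\ge (1-\beta)n(t)$, hence $P(r_j(t))\ge P((1-\beta)n(t))$.

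For Case I ($N\le\beta n(t)$), the algorithm processes the $\beta n(t)$ most recent outstanding jobs at the common speed $P(N/(\beta n(t)))$. Combining Facts A and B,
\[
d\Phi_1/dt\Big|_{\mathrm{alg}}\le -(\beta-\gamma)n(t)\cdot\frac{P((1-\beta)n(t))}{P(N)}\cdot P\!\left(\frac{N}{\beta n(t)}\right).
\]
With $P(x)=x^{1/\alpha}$ the last two factors multiply to $P((1-\beta)n(t))/P(\beta n(t))$, and concavity of $P$ with $P(0)=0$ (which yields $P(\lambda y)\ge \lambda P(y)$ for $\lambda\in[0,1]$) reduces this ratio to at least $(1-\beta)/P(\beta)$, giving the first branch. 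Case IIa is handled by the $\Phi_2$ term alone: the $\min\{N,n_\iota(t)\}$ in-elastic jobs are each processed at unit speed, so $d\Phi_2/dt|_{\mathrm{alg}}\le -\min\{N,n_\iota(t)\}$, and the elastic processing is discarded as a nonpositive contribution to $\Phi_1$. Case IIb mirrors Case I: among the $\beta n(t)$ most recent jobs at least $\beta n(t)-n_\iota(t)>(\beta-\theta)n(t)$ are elastic (hence processed), and subtracting Fact A leaves at least $(\beta-\theta-\gamma)n(t)$ processed jobs with $w_j^A>w_j^o$ --- this is exactly where the hypothesis $\theta+\gamma<\beta$ is used. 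Running the same concavity collapse delivers the third branch.

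The hardest step will be the concavity collapse: one must verify that the rank-factor $P((1-\beta)n(t))/P(N)$ multiplied by the algorithm-chosen speed $P(N/(\beta n(t)))$ reduces to a quantity depending only on $\alpha$ and $\beta$, so that the drift is linear in $n(t)$ with a coefficient independent of $N$ and $n_o(t)$. Secondary care-points are that the ceilings $\lceil\beta n(t)\rceil$ and $\lceil\theta n(t)\rceil$ only strengthen the ``useful'' counts in Fact A, that the unity-speed constraint in Case I is automatically honored because $P(N/(\beta n(t)))\le 1$ whenever $N\le\beta n(t)$, and that during continuous processing the rank $r_j(t)$ is constant (it can only change at job departures, which are handled separately by Lemma~\ref{lem:jumplcfs}).
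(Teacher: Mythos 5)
Your proposal is correct and follows essentially the same route as the paper's proof: the same three-case split dictated by the algorithm's definition, the same counting argument that at least $(\beta-\gamma)n(t)$ (resp.\ $(\beta-\theta-\gamma)n(t)$) processed jobs satisfy $w_j^A>w_j^o$ with rank at least $(1-\beta)n(t)$, the same per-job decrease rate of $(1-\beta)/P(\beta)$ for $\Phi_1$, and the same unit-speed bound $-\min\{N,n_\iota(t)\}$ on $d\Phi_2/dt$ in Case IIa. The only cosmetic difference is that you collapse $\frac{r_j}{P(N)Q(r_j)}$ to $\frac{P(r_j)}{P(N)}$ and invoke concavity of $P$, whereas the paper uses monotonicity of $Q$ together with the identity $Q(x)P(x)=x$; the resulting arithmetic is identical.
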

The proof of Lemma \ref{lem:algphi} is provided in Appendix \ref{app2}.
  To prove Theorem \ref{thm:flowtimeplusenergyonline}, we check the running condition \eqref{eq:runcond} for the following two cases separately for a fixed $\gamma$ such that $\theta+\gamma < \beta$ (choice to be made later) : i) $n_o(t) > \gamma n(t)$ and  ii) $n_o(t) \le \gamma n(t)$, and show that it holds for a constant $\kappa$.

Case i) $n_o(t) > \gamma n(t)$. In this case, we only count the $\opt's$ contribution to $d\Phi(t)/dt$, which is sufficient since the algorithm's contribution to $d\Phi(t)/dt$ is always non-positive. 
From Lemma \ref{lem:optphi}, we have that 
\begin{align}\nn
  n(t) +  d\Phi(t)/dt & \le n(t)  + c_1 n(t) \frac{Q(n_o(t))}{Q(n(t))} +c_2n_o(t),   \\ \nn
  &  \stackrel{(a)}\le n(t) + c_1 n(t) \frac{Q(b n(t))}{Q(n(t))} +c_2n_o(t) , \\ \nn
  &  =  n(t) + c_1 n(t) b^{1-1/\alpha} + c_2n_o(t), \\  \nn
  &   \stackrel{(b)}\le  n(t)  + c_1 n(t) b +c_2n_o(t), \\ 
  & =  n(t)  + c_1 n_o(t) +c_2n_o(t), \\  \label{eq:optcont1}
    &  \stackrel{(c)}\le  (1/\gamma+c_1+c_2 ) n_o(t),
\end{align}
where in $(a)$ we let $n_o(t) = b n(t)$ and  inequality $(b)$ follows when $b>1$. 
Finally $(c)$ follows since $n_o(t) > \gamma n(t)$.
When $b<1$, then $\frac{Q(b n(t))}{Q(n(t))} < 1$. Thus, similar to \eqref{eq:optcont1}, for $b<1$, we get 
\begin{align}\nn
  n(t) + d\Phi(t)/dt & \le n(t) + c_1 n(t) \frac{Q(n_o(t))}{Q(n(t))}+c_2n_o(t),   \\ \nn
    & \le  n(t)(1+c_1)+c_2n_o(t), \\ \label{eq:optcont2}
    & \le \left(\frac{1+c_1}{\gamma}+c_2\right) n_o(t).
\end{align}
%

Case ii) $n_o(t) \le \gamma n(t)$. 
Let $n_o(t)>0$. 

ii-a) With $\frac{N}{\beta n(t)} \le 1$, from Lemma \ref{lem:optphi} and Lemma \ref{lem:algphi}, \eqref{eq:runcond} can be bounded as  $n(t) + d\Phi(t)/dt  $
\begin{align}\nn
 & \le n(t) +  c_1 n(t) \frac{Q(n_o(t))}{Q(n(t))} + c_2n_o(t)  -c_1 \frac{(1-\beta) (\beta - \gamma)}{P(\beta)} n(t) ,\\ \nn
  &  \stackrel{(a)}\le  c_2n_o(t) + n(t) \left(1+ c_1\left(\gamma^{1-1/\alpha} - \frac{(1-\beta) (\beta - \gamma)}{P(\beta)}\right)\right), \\ \label{eq:rel1}
  & \stackrel{(b)} \le   c_2n_o(t),
  \end{align}
  where $(a)$ follows since $n_o(t)\le \gamma n(t)$, while $(b)$ follows
 for choice of $\gamma, \beta, c$ that satisfy
   \begin{equation}\label{eq:cond1}
\frac{(1-\beta) (\beta - \gamma)}{P(\beta)} > \gamma^{1-1/\alpha} \ \text{and} \  c_1 \ge  \frac{-1}{\left(\gamma^{1-1/\alpha} - \frac{(1-\beta) (\beta - \gamma)}{P(\beta)} \right)}.
\end{equation}
ii-b) When $\frac{N}{\beta n(t)} > 1$ and $n_\iota(t) \ge \theta n(t)$, from Lemma \ref{lem:optphi} and Lemma \ref{lem:algphi}, \eqref{eq:runcond} can be bounded as   $n(t) + d\Phi(t)/dt  $
\begin{align}\nn
 & \le n(t) +  c_1 n(t) \frac{Q(n_o(t))}{Q(n(t))} + c_2n_o(t)-  c_2 \min\{N, n_\iota(t)\},\\\nn
  &  \stackrel{(a)}\le  c_2n_o(t) + n(t) \left(1+ c_1\gamma^{1-1/\alpha} - c_2\theta\right), \\\label{eq:rel2}
  & \stackrel{(b)} \le   c_2n_o(t),
  \end{align}
  where $(a)$ follows since $n_o(t)\le \gamma n(t)$, $n_\iota(t) \ge \theta n(t), \frac{N}{\beta n(t)}>1$ and $\theta < \beta$, while $(b)$ follows
 for 
 \begin{equation}\label{eq:cond2}
c_2 \ge \frac{(1+ c_1\gamma^{1-1/\alpha})}{\theta}.
\end{equation}

ii-c) Finally, when $\frac{N}{\beta n(t)} > 1$ and $n_\iota(t) < \theta n(t)$, from Lemma \ref{lem:optphi} and Lemma \ref{lem:algphi}, \eqref{eq:runcond} can be bounded as $n(t) + d\Phi(t)/dt  $ 
\begin{align}\nn
 & \le n(t) +   \frac{c_1 n(t)Q(n_o(t))}{Q(n(t))} -  \frac{c_1(1-\beta) (\beta - \theta- \gamma) n(t)}{P(\beta)} \\ \nn
 &\quad \quad + c_2n_o(t),\\ \nn
  &  \stackrel{(a)}\le  c_2n_o(t) + n(t) \left(1+ c_1\left(\gamma^{1-1/\alpha} - \frac{(1-\beta) (\beta -\theta- \gamma)}{P(\beta)}\right)\right), \\ \label{eq:rel3}
  & \stackrel{(b)} \le   c_2n_o(t),
  \end{align}
  where $(a)$ follows since $n_o(t)\le \gamma n(t)$, while $(b)$ follows
 for choice of $\gamma, \beta, c$ that satisfy
   \begin{equation}\label{eq:cond3a}
\frac{(1-\beta) (\beta - \theta- \gamma)}{P(\beta)} > \gamma^{1-1/\alpha}\end{equation} and 
\begin{equation}\label{eq:cond3b}   c_1 \ge  \frac{-1}{\left(\gamma^{1-1/\alpha} - \frac{(1-\beta) (\beta - \theta- \gamma)}{P(\beta)} \right)}.
\end{equation}

When $n_o(t)=0$, the $\opt$'s contribution to $d\Phi(t)/dt$ is zero, and we can bound  \eqref{eq:runcond} with smaller value of $\kappa$.
Combining \eqref{eq:rel1}, \eqref{eq:rel2}, \eqref{eq:rel3},  together with \eqref{eq:optcont1} and \eqref{eq:optcont2},
the competitive ratio of 
the proposed algorithm  is  at most
\begin{equation}\label{eq:finalbound}
\frac{1+c_1}{\gamma}+c_2
\end{equation}
 for $\beta, \theta, \gamma$, that satisfy \eqref{eq:cond1}, \eqref{eq:cond2}, \eqref{eq:cond3a} and \eqref{eq:cond3b}.
Depending on $\alpha >1$, there exists a $\beta < 1$ satisfying \eqref{eq:cond1}, \eqref{eq:cond3a}  and \eqref{eq:cond3b} with $\theta = \gamma = \beta^2/2$ as follows. In particular, with 
 $\theta = \gamma = \beta^2/2$, to satisfy \eqref{eq:cond1}, \eqref{eq:cond3a} and \eqref{eq:cond3b} i.e., $\frac{(1-\beta) (\beta - \gamma)}{P(\beta)} > \gamma^{1-1/\alpha}$ and $ \frac{(1-\beta) (\beta - \theta- \gamma)}{P(\beta)} > \gamma^{1-1/\alpha}$, it is sufficient that $1-2\beta +\beta^2 > \frac{\beta}{2}^{1-1/\alpha}$. Since $\alpha >1$, $1-2\beta +\beta^2 -\left(\frac{\beta}{2}\right)^{1-1/\alpha} = 1$ at $\beta=0$. Thus, using continuity, we know that there exists a $0 < \beta < 1$ satisfying \eqref{eq:cond1}, \eqref{eq:cond3a} and \eqref{eq:cond3b} with $\theta = \gamma = \beta^2/2$.
This implies that the competitive ratio is a constant that only depends on $\alpha$ and not on any 
other system parameter. Moreover, notice that as $\alpha \rightarrow 1$, the appropriate choice of $\beta$ decreases implying that the competitive ratio \eqref{eq:finalbound} increases. 

For example, for $\alpha=2$, let $\beta = \frac{1}{6}$ and $\theta= \gamma = \beta^2/2$,  $c_1= \frac{-1}{\left(\gamma^{1-1/\alpha} - \frac{(1-\beta) (\beta - \theta- \gamma)}{P(\beta)} \right)}=6.06$, $c_2= \frac{(1+c_1\gamma^{1-1/\alpha})}{\theta} = 72(1+.77)=127.44$. We get a competitive ratio of 
$\frac{1+c_1}{\gamma}+c_2 \le 72\times (1+6.06)  + 127.44= 635.76$.

Analytically optimizing the competitive ratio with respect to the variables, $\beta, \alpha$, and $\gamma$  could result in a much lower bound, however, appears difficult. Numerically, however, one can easily do so.

\begin{remark}\label{rem:choiceofbeta} For any $\alpha>1$, choosing $\theta = \gamma = \beta^2/2$, and $0<\beta<1$ such that $1-2\beta +\beta^2 > \frac{\beta}{2}^{1-1/\alpha}$ is sufficient to make the competitive ratio constant. Moreover, finding such a $\beta$ is easy numerically.
\end{remark}
\section{}\label{app1}
\begin{proof}[Proof of Lemma \ref{lem:jumplcfs}]
 On an arrival of a new job $j$, the ranks of all the existing jobs do not change, while for the newly arrived job $j$, $w_j^A(t) - w_j^o(t)=0$.
  Hence the potential function $\Phi_1(t)$ \eqref{defn:phi} does not change on arrival of any new job. 
  
 On a departure of a job with the algorithm, rank of any remaining job can only decrease, in particular by $1$. 
 Thus, if at time $t$ when job $k$ departs with the algorithm, job $j$'s ($j\in A(t^{+})$) rank at time  $t^{+}$, is either $r_j(t^+) =  r_j(t)$ or 
 $r_j(t^+) =  r_j(t)-1$. Since function $\frac{r_j(t)}{Q(r_j(t))}$ is a non-decreasing function, thus the potential function $\Phi_1(t)$ does not increase on departure of a job with the algorithm. 
 
%
%
For the $\opt$, only $w_j^o(t)$ decreases with job processing and that too smoothly. Thus,  there is no discontinuity when  a job departs with the $\opt$, hence $\Phi_1(t)$ does not change when   a job departs with the $\opt$.

Moreover, for $\Phi_2(t)$, on an arrival of a new job ${\bar w}_{j\iota}^A(t) - {\bar w}_{j\iota}^o(t)=0$, while 
 there is no discontinuity when a job departs with the $\opt$ or the algorithm, since both ${\bar w}_{j\iota}^A(t)$ and ${\bar w}_{j\iota}^o(t)$ decrease smoothly.  Hence $\Phi_2(t)$ does not change when  a new job arrives or a job departs with the $\opt$ or the algorithm.

\end{proof}
\begin{proof}[Proof of Lemma \ref{lem:optphi}]

We begin with the following simple result whose proof is immediate.
\begin{lemma}\label{lem:optmaxspeedspecial}
Disregarding the unit speed constraint for any job whose in-elastic part is being processed,  the maximum speed devoted to processing any one job by the $\opt$ is at most $P(N)$. 
Moreover, the sum of the speeds with which $\opt$ is processing any of its $k$ jobs is at most $Q(k)P(N)$.
\end{lemma}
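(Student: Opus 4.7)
The plan is to establish both claims by a direct appeal to the server-budget constraint $\sum_i k_i(t)\le N$ together with the concavity of the speed-up function $P(k)=k^{1/\alpha}$ on $[0,\infty)$, which holds since $\alpha>1$. No structural information about how $\opt$ schedules its jobs will be needed beyond the basic fact that its aggregate server usage at any time is at most $N$; the exemption from the unit-speed cap on in-elastic jobs stated in the lemma is precisely what allows a uniform treatment of every allocation through $P(\cdot)$.

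For the first claim, the observation is immediate: any single job can be assigned at most the full pool of $N$ servers, and since $P$ is monotone increasing, the service rate experienced by that one job is at most $P(N)$.

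For the second claim, let $k_1,\dots,k_k$ denote the (possibly fractional, per the continuous allocation model) server allocations made by $\opt$ to the $k$ jobs it is processing at some time $t$, so that $\sum_{i=1}^k k_i\le N$. Since $P$ is concave with $P(0)=0$, Jensen's inequality applied to the average of the $k_i$ gives
$$\sum_{i=1}^k P(k_i)\ \le\ k\cdot P\!\left(\frac{1}{k}\sum_{i=1}^k k_i\right)\ \le\ k\cdot P\!\left(\frac{N}{k}\right)\ =\ k\cdot\left(\frac{N}{k}\right)^{1/\alpha}\ =\ k^{1-1/\alpha}N^{1/\alpha}\ =\ Q(k)\,P(N),$$
which is the desired bound; the extremal case is the equal split $k_i=N/k$, at which the bound is tight.

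There is essentially no obstacle: the lemma is a restatement of concavity of $P$ combined with the server-budget constraint. The only point worth flagging is the phrasing "any of its $k$ jobs," which also covers the case where $\opt$ runs more than $k$ jobs and we isolate $k$ of them, because the sum $\sum_{i=1}^k k_i$ is then bounded by $N$ a fortiori, and the same chain of inequalities applies verbatim.
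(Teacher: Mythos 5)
Your proof is correct and is precisely the argument the paper leaves implicit (the paper only remarks that the proof is ``immediate'' and supplies no details): the first claim is monotonicity of $P$ under the budget $k_i\le N$, and the second is Jensen's inequality for the concave function $P$ under $\sum_i k_i\le N$, yielding $\sum_i P(k_i)\le k\,P(N/k)=Q(k)P(N)$. Your closing remark that the bound holds a fortiori for any subset of $k$ jobs is also consistent with how the lemma is invoked later (e.g.\ with $k=\min\{n(t),n_o(t)\}$).
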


From the definition of $\Phi(t)$ \eqref{defn:phi},  $\opt$ can increase $\Phi_1(t)$ at time $t$ only if it processes jobs that also belong to the set $A(t)$ (outstanding jobs with the algorithm). Thus, from Lemma \ref{lem:optmaxspeedspecial}, the maximum sum of the speeds  devoted to the  set of $A(t)$ jobs by the $\opt$ is at most  
$Q(n(t)) P(N)$, where each job gets processed at speed $P\left(\frac{N}{n(t)}\right)$. Moreover, since $\opt$ contains only $n_o(t)$ jobs, sum of the speeds devoted to the 
$n(t)$ jobs of the algorithm is at most $$Q(\min\{n(t), n_o(t)\}) P(N).$$ 
From the definition 
of  $\Phi_1(t)$ \eqref{phi1}, the maximum increase in $\Phi_1(t)$ is possible if the total speed of the $\opt$ that it can dedicate to jobs belonging to $A(t)$ is dedicated to the single job with the largest rank among $A(t)$, i.e., the job with rank equal to $n(t)$. Thus, because of processing by the $\opt$\begin{align}\nn
  d\Phi_1(t)/dt  &\le   \frac{n(t)}{P(N) Q(n(t))}  \times Q(\min\{n(t), n_o(t)\}) P(N), \\
   & \le  n(t) \frac{Q(n_o(t))}{Q(n(t))}.
\end{align}

Moreover, any job that is in its in-elastic phase  can be processed with at most unit speed. Since there are at most $n_o(t)$ jobs with the $\opt$ that 
are in their in-elastic phases, we get  
$$d\Phi_2/dt \le n_o(t).$$

   \end{proof}

\section{}\label{app2}
\begin{proof}[Proof of Lemma \ref{lem:algphi}]
  
  Case I : $\left(\frac{N}{\beta n(t)}\right) \le 1$
   Since the algorithm executes the $\beta n(t)$ jobs that have arrived most recently, the rank of job $i$ that is being processed by the algorithm is 
  $r_i(t) = n(t) - i+1$ for $i=1, \dots, \beta n(t)$. Since $n_o(t) \le \gamma n(t)$, and $\gamma < \beta$,  $$w_j^A(t) - w_j^o(t) > 0$$ for at  least 
  $(\beta - \gamma) n(t)$ jobs with the  algorithm. In the worst case, the ranks of these $(\beta - \gamma) n(t)$ jobs are 
  $(1-\beta) n(t) + i-1$ for $i=1, \dots, (\beta-\gamma) n(t)$.
  
  Since the speed for any of the job executed by the algorithm is $s(t) = P\left(\frac{N}{\beta n(t)}\right)$, the change in the potential function because of the algorithm's  
  processing to $\Phi_1(t)$ is 
  \begin{align*}\label{}
d\Phi_1(t)/dt\le  &  -\sum_{i= (1-\beta) n(t)}^{(1-\beta) n(t) + (\beta - \gamma) n(t)} \frac{r_i(t)}{P(N)Q(r_i(t))}  P\left(\frac{N}{\beta n(t)}\right),\\
  \stackrel{(a)} \le &   - \sum_{i= (1-\beta) n(t)}^{(1-\beta) n(t) + (\beta - \gamma) n(t)} \frac{r_i(t)}{Q(n(t))}  \frac{1}{ P(\beta n(t))}, \\
  = &   - \sum_{i= (1-\beta) n(t)}^{(1-\beta) n(t) + (\beta - \gamma) n(t)} \frac{r_i(t)}{Q(n(t))}  \frac{1}{ P(n(t))} \frac{1}{P(\beta)} , \\
  \stackrel{(b)} = &   - \sum_{i= (1-\beta) n(t)}^{(1-\beta) n(t) + (\beta - \gamma) n(t)}  \frac{r_i(t)}{n(t)} \frac{1}{P(\beta)}, \\
 \stackrel{(c)}\le &  -  \frac{(1-\beta) (\beta - \gamma) n(t) n(t)}{\beta } \frac{1}{n(t) P(\beta)} , \\
  = & -  \frac{(1-\beta) (\beta - \gamma) n(t) }{P(\beta)} , 
\end{align*}
where $(a)$ follows since $r_i(t) \le n(t)$ and $$\frac{P\left(\frac{N}{\beta n(t)}\right)}{P(N)} \ge \frac{1}{ P(\beta n(t))},$$ while $(b)$  follows since $Q(x) P(x) = x$, and finally $(c)$ follows since there are $(\beta-\gamma) n(t)$ jobs that are being executed each with rank at least $(1-\beta) n(t)$.
For $\Phi_2(t)$, in this case, we just bound $d\Phi_2(t)/dt \le 0$ because of the algorithm's processing.

 Case II : $\left(\frac{N}{\beta n(t)}\right) > 1$

 IIa:  $n_\iota(t) \ge \theta n(t)$
 In this case, for the algorithm we will only consider the drift $d\Phi_2(t)/dt$, and trivially upper bound $d\Phi_1(t)/dt \le 0$. When $n_\iota(t) \ge \theta n(t)$, each of the $\min\{N, n_\iota(t)\}$ jobs are processed 
 at unit speed by the algorithm, and we get 
 \begin{equation}\label{}
d\Phi_2(t)/dt \le - \min\{N, n_\iota(t)\}.
\end{equation}
 IIb: $n_\iota(t) < \theta n(t)$
 In this case, for the algorithm we will only consider the drift $d\Phi_1(t)/dt$ and upper bound $d\Phi_2(t)/dt \le 0$.
 
 In this case, the algorithm executes those jobs that are in their elastic phases among the $\beta n(t)$ jobs that have arrived most recently.
 Since $n_\iota(t) < \theta n(t)$, and $n_\iota(t)+ n_e(t) = n(t)$, at least $(\beta-\theta)n(t)$ jobs (that are in their elastic phases) are being processed.
 
Moreover, since $n_o(t) \le \gamma n(t)$,  for at  least 
  $(\beta - \theta- \gamma) n(t)$ jobs that are being processed by the algorithm $$w_j^A(t) - w_j^o(t) > 0,$$ and the rank of each of these $(\beta - \theta -\gamma) n(t)$ jobs is at least  
  $(1-\beta) n(t)$.
  
  Since the speed for any of the job executed by the algorithm is $s(t) = P\left(\frac{N}{\beta n(t)}\right)$, the change in the potential function $\Phi_1(t)$ because of the algorithm's  
  processing is 
  \begin{align*}\label{}
d\Phi_1(t)/dt\le  &  - \sum_{i=1}^{(\beta-\theta-\gamma)n(t)} \frac{r_i(t)}{P(N)Q(r_i(t))}  P\left(\frac{N}{\beta n(t)}\right),\\
  \stackrel{(a)} \le &   - \sum_{i=1}^{(\beta-\theta-\gamma)n(t)} \frac{r_i(t)}{Q(n(t))}  \frac{1}{ P(\beta n(t))}, \\
  = &   - \sum_{i=1}^{(\beta-\theta-\gamma)n(t)} \frac{r_i(t)}{Q(n(t))}  \frac{1}{ P(n(t))} \frac{1}{P(\beta)} , \\
  \stackrel{(b)} = &   - \sum_{i=1}^{(\beta-\theta-\gamma)n(t)}  \frac{r_i(t)}{n(t)} \frac{1}{P(\beta)}, \\
 \stackrel{(c)}\le &  - \frac{(1-\beta) (\beta - \theta - \gamma) n(t) n(t)}{\beta } \frac{1}{n(t) P(\beta)} , \\
  = & -  \frac{(1-\beta) (\beta - \theta- \gamma) n(t) }{P(\beta)} , 
\end{align*}
where $(a)$ follows since $r_i(t) \le n(t)$, while $(b)$  follows since $Q(x) P(x) = x$, and finally $(c)$ follows since there are $(\beta-\theta- \gamma) n(t)$ jobs that are being executed each with rank at least $(1-\beta) n(t)$.

\end{proof}

\section{}\label{app3}
\begin{proof}[Proof of Lemma \ref{lem:jumpequi}]
First we argue about $\Phi^{sf}_1(t)$.
 On a departure of a job with the algorithm or the $\opt$, $n^i(t,q)$ or $n_o^i(t,q)$ changes for only $q=0$, and since there is an integral outside,  $\int_{0}^\infty ( n^i(t,q) - n_o^i(t,q))^+ dq$ remains the same on a departure of a job with either the algorithm or the $\opt$. 

The pre-factor term $P\left(\frac{n(t)}{N}\right)$ changes though, however only decreases, when  there is a departure of a job with the algorithm, on account of $n(t) \rightarrow n(t)-1$. Since the integral is always non-negative, overall, the potential function can only decrease if at on account of a departure with the algorithm.
 Moreover, the departure of any job with the $\opt$ does not change the pre-factor.
 Since the integral is always non-negative, thus $\Phi^{sf}_1(t)$ does not increase on a departure with the algorithm or the $\opt$.
 
 For the $\Phi^{sf}_2(t)$, there are no discontinuities, thus $\Phi^{sf}_2(t)$ also does not increase on a departure with the algorithm or the $\opt$.
\end{proof}
   \section{}\label{app4}
   \begin{proof}[Proof of Lemma \ref{lem:driftphisopt}]
   From Lemma \ref{lem:optmaxspeedspecial}, we know that the sum of the speeds used by the $\opt$ over its $n_o(t)$ jobs is at most 
\begin{equation}\label{eq:dummy400}
\sum_{i=1}^{n_o(t)} s^o_i(t) \le Q(n_o(t)) P(N).
\end{equation}

Using this, we bound the drift $d \Phi^{sf}_1(t)/dt$ with respect to processing by the $\opt$ as follows
\begin{align} \nn
d \Phi^{sf}_1(t)/dt &\stackrel{(a)} \le P\left(\frac{n(t)}{N}\right)\left(\sum_{i=1}^{n_o(t)} s^o_i(t)\right),\\    \nn
  &\stackrel{(b)} \le P\left(\frac{n(t)}{N}\right)Q(n_o(t)) P(N),   \\ \nn
  &\le    P(n(t))Q(n_o(t)), \\   \nn
  & =  n(t)^{1/\alpha} n_o(t) ^{1-1/\alpha}, \\    \label{eq:phiboundsf2}
  & \stackrel{(c)}\le   \left(\frac{1}{\alpha}\right) n(t) + \left(1-\frac{1}{\alpha}\right) n_o(t). 
\end{align}
where for $(a)$ we assume that all the $n_o$ jobs of the $\opt$ are getting processed at non-zero speed (best case in terms of increasing $d \Phi^{sf}(t)/dt$), while $(b)$ follows from \eqref{eq:dummy400}, and 
$(c)$ 
follows from the generalized AM-GM inequality. \footnote{For $a_i\ge 0$ and $\lambda_i\ge 0$ with $\sum_{i=1}^n \lambda_i=1$, then 
$\prod_{i=1}^n c_i^{\lambda_i} \le \sum_{i=1}^n \lambda_i c_i$.} 

Moreover, any job that is in its in-elastic phase  can be processed with at most unit speed. Since there are at most $n_o(t)$ jobs with the $\opt$ that 
are in their in-elastic phases, we get  
$$d\Phi^{sf}_2/dt \le n_o(t).$$
\end{proof}
\section{}\label{app5}
\begin{proof}[Proof of Lemma \ref{lem:driftphisalg}]
Case I $n(t) \ge N$. In this case, note that for at least $\max\{n(t)-n_o(t),0\}$ jobs belonging to $A(t)$, the corresponding terms $( n^i(t,q) - n_o^i(t,q))^+ > 0$ in $\Phi_1^{sf}(t)$. Thus, algorithm \textsc{PA-EQUI} is decreasing work at speed $s_i(t)$ for at least $\max\{n(t)-n_o(t),0\}$ jobs. Hence, the drift $d \Phi^{sf}_1(t)/dt$ with respect to processing by the algorithm \textsc{PA-EQUI} is
\begin{align}\nn
d \Phi^{sf}_1(t)/dt & \stackrel{(a)}= P\left(\frac{n(t)}{N}\right)\left(-\max\{n(t)-n_o(t),0\} s_i(t)  \right),\\   \label{eq:phiboundsf1}
  &\stackrel{(b)} =  \left(-\max\{n(t)-n_o(t),0\}  \right)
\end{align}
where $(a)$ follows since for at least $\max\{n(t)-n_o(t),0\}$ jobs, the \textsc{PA-EQUI} algorithm is decreasing work at speed $s_i(t)$, while $(b)$ follows since 
$s_i(t)=P\left(\frac{N}{n(t)}\right)$ for all jobs $i$ being processed by the \textsc{PA-EQUI} algorithm. 

Case II  $n(t) < N$ and $n_\iota(t) \ge \delta n(t)$. In this case, for the \textsc{PA-EQUI} algorithm we will only consider the drift $d\Phi_2(t)/dt$, and trivially upper bound $d\Phi^{sf}_1(t)/dt \le 0$. When $n_\iota(t) \ge \delta n(t)$, each of the $n_\iota(t)$ jobs are processed 
 at unit speed by the algorithm, and we get 
 \begin{equation}\label{eq:phiboundsf2}
d\Phi_2(t)/dt \le - n_\iota(t).
\end{equation}
Case III  $n(t) < N$ and $n_\iota(t) < \delta n(t)$ or equivalently  $n_e(t) \ge (1-\delta) n(t)$.
In this case, note that for at least $\max\{n_e(t)-n_o(t),0\}$ jobs belonging to $A(t)$, the corresponding terms $( n^i(t,q) - n_o^i(t,q))^+ > 0$ in $\Phi_1^{sf}(t)$. Thus, algorithm \textsc{PA-EQUI} is decreasing work at speed $s_i(t)$ for at least $\max\{n_e(t)-n_o(t),0\}$ jobs. Hence, the drift $d \Phi^{sf}_1(t)/dt$ with respect to processing by the algorithm \textsc{PA-EQUI} is
\begin{align}\nn
d \Phi^{sf}_1(t)/dt & \stackrel{(a)}= P\left(\frac{n(t)}{N}\right)\left(-\max\{n_e(t)-n_o(t),0\} s_i(t)  \right),\\   \label{eq:phiboundsf3}
  &\stackrel{(b)} = \left(-\max\{n_e(t)-n_o(t),0\}  \right)
\end{align}
where $(a)$ follows since for at least $\max\{n_e(t)-n_o(t),0\}$ jobs, \textsc{PA-EQUI}  is decreasing work at speed $s_i(t)$, while $(b)$ follows since 
$s_i(t)=P\left(\frac{N}{n_e(t)}\right)$ for all jobs $i$ being processed by the \textsc{PA-EQUI} and $n(t) \ge n_e(t)$ by definition.

Combining \eqref{eq:phiboundsf1}, \eqref{eq:phiboundsf2}, and \eqref{eq:phiboundsf3}, the proof is complete.
\end{proof}

\end{document}